\numberwithin{equation}{section}
\numberwithin{figure}{section}
\theoremstyle{plain}
\newtheorem{thm}{Theorem}
  \theoremstyle{definition}
  \newtheorem{defn}[thm]{Definition}
 \theoremstyle{definition}
  \newtheorem{example}[thm]{Example}
\begin{document}

\title{NP is not AL and P is not NC is not NL is not L}

\author{KOBAYASHI, Koji}

\maketitle

\section{Overview}

This paper talk about that NP is not AL and P, P is not NC, NC is
not NL, and NL is not L. The point about this paper is the depend
relation of the problem that need other problem's result to compute
it. I show the structure of depend relation that could divide each
complexity classes.

\section{The condition to emulate the TM by using UTM}

We will begin by considering the important nature of Turing Machine
(TM) in this paper. 
\begin{defn}
\label{def:Action Configuration}I will use the term {}``Action Configuration''
to the part of the computation configuration that decide next transition.
Action configuration include the information of state, transition
function, position (head), and read memory (tape alphabet at head).
I will use the term {}``Origin Configuration'', {}``Moving Configuration'',
{}``Target Configuration'', {}``Affirm Configuration'', {}``Negate
Configuration'', and {}``Computation Progress'' to the action configuration
of start configuration, computing configuration, halting configuration,
accepting configuration, rejecting configuration. 
\end{defn}
Action configuration have all information to yield the next configuration,
therefore we can make the universal turing machine (UTM) that emulate
TM by using the information include the action configuration.
\begin{defn}
\label{def:VTM}I will use the term {}``VTM'' as {}``Virtual Turing
Machine'' to the TM that UTM emulate.\end{defn}
\begin{thm}
\label{thm:Emulation of TM by UTM}Log space is necessary and sufficient
to record action configuration.\end{thm}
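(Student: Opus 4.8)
The plan is to prove the theorem by showing two directions, matching the "necessary and sufficient" claim. First I would make precise what it means to "record an action configuration." By Definition~\ref{def:Action Configuration}, an action configuration consists of the current state, the transition function, the head position, and the tape symbol currently read. To bound the space needed to store this information for a VTM running on an input of length $n$, I would treat the state set, the transition function table, and the tape alphabet as fixed finite objects of the VTM being emulated — their descriptions have size $O(1)$ in $n$. The only component whose size grows with the input is the head position, which on a tape of length polynomial in $n$ (or, for the space-bounded classes under discussion, on a worktape of the relevant size) ranges over a set of size polynomial in $n$, and hence can be encoded in $O(\log n)$ bits.

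For the sufficiency direction I would argue that $O(\log n)$ space suffices: the UTM stores the constant-size state and read symbol directly, keeps the transition function as part of its own finite control (or in a read-only region), and maintains the head position as a binary counter of length $O(\log n)$. Updating the action configuration to the next one requires reading the transition rule determined by the (state, symbol) pair and incrementing or decrementing the position counter — all doable in logarithmic space. Thus logarithmic space is sufficient to record and advance an action configuration.

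For the necessity direction I would show that $\Omega(\log n)$ space is also required, and this is where I expect the real content to lie. The argument is that the head position alone forces logarithmic space: to distinguish among the polynomially-many possible head positions on the VTM's tape, the UTM must store an index that can take polynomially many values, and any such index needs $\Omega(\log n)$ bits by a counting argument. I would make this rigorous by noting that if the UTM used only $o(\log n)$ space to record the configuration, then by the standard pigeonhole bound on the number of distinct configurations it could not faithfully distinguish all reachable head positions, contradicting the requirement that the action configuration determine the next transition correctly.

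The main obstacle, I anticipate, is stating the necessity half cleanly without circularity: the claim "log space is necessary" must be pinned to the right notion of input size and the right model, since a VTM operating on a bounded worktape might in principle have a head ranging over fewer positions. I would therefore be careful to fix the convention — that the head position ranges over a domain whose size is polynomial in $n$ so that its index genuinely requires $\Theta(\log n)$ bits — and to separate the finite, input-independent data (state, alphabet, transition table) from the single input-dependent quantity (position) that drives the logarithmic bound. Getting this separation and the accompanying counting argument exactly right is the crux; the rest is bookkeeping.
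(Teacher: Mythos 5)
Your proposal is correct, and on the sufficiency half it coincides with the paper's proof: both decompose the action configuration into state, transition function, and read symbol (each of constant size, since the emulated TM is a fixed finite object) plus the head position, the single input-dependent component, stored as an $O(\log n)$ counter. Where you genuinely differ is that you also prove the necessity half. The paper's proof, despite the theorem asserting ``necessary and sufficient,'' establishes only sufficiency --- it enumerates the constant-space components, notes the position fits in log space, and concludes ``UTM can record TM's action configuration into log space,'' with no lower-bound argument at all. Your pigeonhole argument --- that distinguishing the polynomially many possible head positions forces $\Omega(\log n)$ bits, since $o(\log n)$ space yields too few distinct configurations to determine the next transition faithfully --- supplies exactly the direction the paper omits. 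Your further point about fixing the convention (that the head ranges over a domain of size polynomial in $n$, so the index genuinely needs $\Theta(\log n)$ bits) is also absent from the paper, and it is what keeps the necessity claim from being vacuous for machines with small worktapes. In short, your proof is strictly more complete than the paper's own; the paper buys brevity at the cost of leaving the word ``necessary'' in its theorem statement unsupported.
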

\begin{proof}
The memory space that is required with the action configuration would
be as follows: state and transition function is determined for each
TM, and TM can record in the constant space. And position can record
with log space. And read memory can record with constant space. Therefore,
UTM can record TM's action configuration into log space.
\end{proof}
Next, I talk about the sharing of the information of VTM.
\begin{thm}
\label{thm:Sharing of the information of VTM}In NTM, VTM that execute
nondeterministic branches does not share information and result each
other. If VTM share the information and result, VTM must be executed
in same branch.\end{thm}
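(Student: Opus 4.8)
The plan is to argue directly from the structure of nondeterministic computation as a branching tree of configurations. First I would formalize the computation of an NTM as a tree whose nodes are global configurations and whose edges are single transition steps; a nondeterministic branch is then a root-to-leaf path in this tree, and each VTM corresponds to the deterministic sequence of configurations traversed along one such path. The crucial observation, inherited from the discussion following Definition~\ref{def:Action Configuration}, is that the successor of any configuration is a function of that configuration's action configuration alone --- the state, transition function, head position, and the tape symbol currently read --- all of which are local to the branch on which that configuration lies.

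Next I would make the causal-independence argument. Suppose two VTMs execute on distinct branches, meaning their configuration sequences diverge at some nondeterministic choice point and neither sequence is a continuation of the other. Because the update rule for a branch consults only its own action configuration, the configurations appearing after the divergence on one branch never enter the action configuration that drives the other branch. Hence no intermediate value, tape content, or halting verdict produced on one branch can appear as input to a transition on the other; this is precisely the statement that the two VTMs do not share information or result.

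For the second half of the statement I would simply take the contrapositive of the independence argument. If a transition of one VTM genuinely consumes, as part of its action configuration, some value produced by another VTM, then by the locality of the transition rule that produced value must already reside in the configuration before the consuming step, which forces a derivation chain connecting the two VTMs with no intervening nondeterministic split --- that is, they lie on a common branch. I expect the main obstacle to be pinning down a rigorous definition of \emph{sharing information and result}: I would have to specify that sharing means the result of one computation appears in the action configuration of a step of the other, and then carefully exclude the degenerate case where identical values arise independently on two branches without any causal flow, so that \emph{sharing} is read as genuine information transfer rather than coincidental agreement.
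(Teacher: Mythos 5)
Your argument is sound (to the extent this informally stated theorem admits a proof at all), but it takes a genuinely different route from the paper's. The paper argues ontologically: the nondeterministic branches ``converge to a single VTM,'' the other branches \emph{do not exist}, and what does not exist cannot affect the branch that does; the second sentence of the theorem is then left as an unstated contrapositive. You instead keep every branch alive as a path in the configuration tree and argue by causal locality: since the successor of a configuration is a function of that configuration's action configuration alone, nothing written after a divergence point on one branch can ever enter the action configuration of a step on the other branch, so no information or result can flow across branches; the converse follows because any genuine producer--consumer chain must follow parent--child edges of the tree and therefore lies along a single root-to-leaf path. Your route buys three things the paper's does not. First, it is consistent with the standard semantics of NTMs, in which all branches exist and acceptance quantifies over them --- the paper's claim that non-chosen branches ``do not exist'' sits badly with the acceptance condition it uses elsewhere, e.g.\ in the proof that $NP \ni CHAOS$, where all branches must be examined. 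Second, it proves the second sentence of the theorem explicitly rather than by silent contraposition. Third, it isolates the real weak point: ``share information and result'' is never defined in the paper, and your causal-flow reading (a value produced by one VTM occurring in the action configuration of a step of the other) is precisely the patch needed to make the statement meaningful, including the exclusion of coincidental agreement between branches. One phrasing quibble: a producer--consumer chain may well pass through nondeterministic choice points, taking one side of each; what matters is not the absence of ``intervening splits'' but that the chain is a directed path in the tree, so that the producer's configurations are ancestors of the consumer's --- which is exactly the conclusion that the two VTMs lie on a common branch.
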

\begin{proof}
VTM that execute non deterministic branches converges to single VTM.
The VTM is one of these branches. Another branches do not exist, and
these VTM can not affects the single VTM. Therefore, VTM that execute
non deterministic branches does not share information and result each
other.\end{proof}
\begin{thm}
\label{thm:Independency of VTM}The VTM's moving configuration that
execute in parallel must be recorded in different space. VTM that
need each other's information and result needs to execute in parallel.\end{thm}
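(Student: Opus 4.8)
The plan is to treat the statement as two claims and prove each by contradiction, using the recording requirement from Theorem~\ref{thm:Emulation of TM by UTM} and the non-sharing property of Theorem~\ref{thm:Sharing of the information of VTM}. First I would fix precise working meanings: say that two VTMs \emph{execute in parallel} when, at some common time step of the emulating UTM, both of their moving configurations are live (neither has yet reached a target configuration) and both must be advanced to produce the global next configuration; the \emph{recording space} of a VTM is the region the UTM uses to hold that VTM's action configuration.

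For the first claim, that parallel moving configurations occupy different space, suppose for contradiction that two VTMs executing in parallel were recorded in the same space. By Theorem~\ref{thm:Emulation of TM by UTM} each moving configuration fully determines its successor and occupies log space, and advancing one VTM rewrites the contents of its recording space. If the two VTMs shared that space, advancing the first would overwrite the data the second needs to produce its own successor, so at least one of the two simultaneously live configurations could not be computed, contradicting that both are live and advancing. Hence the two moving configurations must be recorded in disjoint space. I would stress that this is exactly where ``parallel'' (both live now) differs from sequential execution (one finished, its space reusable): only parallelism forces the space to add rather than be shared.

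For the second claim, that mutual dependency forces parallel execution, suppose VTMs $A$ and $B$ each need the other's information and result, and suppose for contradiction that they could run sequentially, say $A$ entirely before $B$. Then $A$ would have to reach its target configuration without ever consulting $B$'s result, contradicting that $A$ needs $B$'s result; the symmetric argument rules out running $B$ first. By Theorem~\ref{thm:Sharing of the information of VTM}, sharing of information and result requires the sharing VTMs to lie in the same branch, so the dependency cannot be discharged across distinct nondeterministic branches either. The only remaining schedule in which each VTM can supply the other's result is one where both are live at the same time step, i.e. parallel execution, and combining this with the first claim then makes the recording spaces additive.

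The step I expect to be the main obstacle is ruling out clever interleaved protocols: an adversary could try to advance $A$ and $B$ in small increments, feeding each partial results of the other so that neither is ever run ``entirely before'' the other, yet arguably neither is genuinely live at the same global step. To close this I would argue that any interleaving in which $A$'s next step consumes a datum $B$ has not yet produced must \emph{stall} $A$, so a consistent schedule is forced to keep both configurations simultaneously live, which is precisely the parallelism asserted. Making the notions of ``partial result'' and ``stall'' rigorous, and checking them against the branch-locality of Theorem~\ref{thm:Sharing of the information of VTM}, is the delicate part of the argument.
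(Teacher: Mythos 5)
Your proposal takes essentially the same route as the paper's own proof: the first claim rests on the identical overwriting argument (advancing one VTM in shared space destroys the moving configuration the other still needs), and the second claim on the same observation that under a sequential schedule the earlier VTM can never consult the later one's result, which is exactly the paper's ``tail-recursive'' remark. The paper's version is terser and never addresses the interleaved-schedule subtlety you flag at the end, so your extra discussion (and the appeal to Theorem~\ref{thm:Sharing of the information of VTM} for cross-branch sharing) elaborates rather than changes the argument.
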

\begin{proof}
If VTM is recorded in the other VTM space whether deterministic or
nondeterministic, UTM have to overwrite with the last VTM. So the
VTM that was overwritten another VTM can not keep the moving configuration
(especially head position) and can not continue computation. If UTM
emulates some VTM in same space, the predecessor VTM can not use successor
VTM information (like tail-recursive.) So the VTM's moving configuration
that execute in parallel must be recorded in different space. 
\end{proof}
I will define TM in this paper as follows;
\begin{defn}
I will use the term {}``NTM'' to the Nondeterministic Turing Machine
that can compute NP problems. I will use the term {}``LATM'' to
the Logarithmic space Alternating Turing Machine that can compute
AL problems. I will use the term {}``LNTM'' to the Logarithmic space
Nondeterministic Turing Machine that can compute NL problems. I will
use the term {}``LDTM'' to the Logarithmic space Deterministic Turing
Machine that can compute L problems. 
\end{defn}
To simplify, I will define UTM and VTM in this paper as follows;
\begin{defn}
\label{def:Limitations of the TM}Tape alphabet of TM is $\left\{ 0,1\right\} $.
Input data of TM is $w$. TM treats $w$ with special tape and head,
and TM does not write $w$. Length of $w$ is $O\left(n\right)$.
UTM write I will use the term {}``Working Memory'' to the memory
that TM can read and write. TM write number of the steps, therefore
computation history is acyclic. TM treats decide problems and TM must
halt.\end{defn}
\begin{thm}
\label{thm:Structure of TM}We think about the set that's elements
are target configurations. DTM's computation history is singleton,
NTM's computation history is set, ATM's computation history is family.
And structure of TM is well-founded set.\end{thm}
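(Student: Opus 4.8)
The plan is to realize the computation history as a structure carried by the action configurations of Definition \ref{def:Action Configuration}, with the ``yields the next configuration'' relation as its edge relation, rooted at the origin configuration and terminating in the target configurations (the affirm and negate configurations). First I would settle well-foundedness as a property of this successor relation; then I would read off the singleton/set/family trichotomy by inspecting the branching permitted at each configuration under the three machine models of the preceding definition.

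For well-foundedness, I would invoke Definition \ref{def:Limitations of the TM} directly. Because the TM records the number of steps, no configuration can recur along a path, so the history is acyclic; and because the TM must halt, every path issuing from the origin configuration is finite. Orienting the successor relation so that each configuration strictly precedes its successors (equivalently, ordering configurations by the strictly increasing step counter), one checks that every nonempty collection of configurations contains a maximal element in the step order, so there is no infinite ascending chain of proper successors. The target configurations are exactly the relation's terminal elements. This is precisely the assertion that the structure of the TM is a well-founded set.

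It then remains to identify how acceptance is aggregated over the terminal configurations, and here the three cases diverge by out-degree. For a DTM the transition function is single-valued, so each configuration has a unique successor, the history collapses to one path, and the reachable target configurations form a singleton. For an NTM the nondeterministic branches supply several successors, yielding a tree whose leaves constitute a genuine set; by Theorem \ref{thm:Sharing of the information of VTM} these branches neither share information nor collapse into one another, so they are honestly distinct elements of that set. For an ATM the universal states impose a conjunction and the existential states a disjunction over successors, so acceptance is decided not by a single set of leaves but by a collection of such sets indexed by the alternation structure, i.e.\ a family. The main obstacle I anticipate is making the word ``family'' precise in the alternating case while keeping it embedded in the single well-founded skeleton just established: the singleton, the set, and the family sit at three different set-theoretic levels, yet all are claimed to carry the same well-founded structure. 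The resolution I would argue is that well-foundedness is a property purely of the successor relation on configurations, which is common to all three models, whereas the singleton/set/family distinction is merely the mode of aggregation layered on top of that shared well-founded base.
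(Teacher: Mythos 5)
Your proposal follows essentially the same route as the paper's own proof: derive acyclicity (hence a DAG / well-founded structure) from the step-counting and halting stipulations of Definition \ref{def:Limitations of the TM}, identify the origin configuration as root and the target configurations as the terminal elements, and then obtain the singleton/set/family trichotomy from the out-degree and aggregation behaviour of DTM, NTM, and ATM respectively. The only cosmetic differences are that you orient the order so targets are maximal where the paper calls them minimal elements, and you justify the NTM case via Theorem \ref{thm:Sharing of the information of VTM} where the paper simply observes that only the target configurations, not the DAG shape, determine the result; neither changes the substance.
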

\begin{proof}
To think about the relation TM's computation history and result. Because
Computation history have no cyclic path in this paper, computation
history become directed acyclic graph (DAG). This DAG have root as
origin configuration, trunk as moving configuration, leaf as target
configuration. We can characterize each TM by using the DAG of the
computation history. Therefore, we can associate TM with set that
correspond with the DAG of TM's computation history. And the set is
well-founded set which minimal elements are target configurations
because DAG have no cyclic part.

DTM's computation history is only one path and have only one target
configuration. Therefore, DTM's computation history correspond with
singleton of the target configuration.

NTM's computation history is DAG. But target configuration that included
DAG affect to the NTM's result, and DAG structure does not affect
to the NTM's result. Therefore, NTM's computation history correspond
with set of the target configuration.

ATM's computation history is DAG. And DAG correspond with hypergraph
that edge correspond with universal state and existential state. Therefore,
ATM's computation history correspond with family of the target configuration's
set.
\end{proof}

\section{The depend relation between some problems}

Think the situation that some VTM is sharing the results. The problem
that describe incomplete and need the another problem's result to
complete meets the condition.
\begin{defn}
\label{def:Variable Problem}\label{def:Blocking Problem}The problem
$P_{i},P_{j}$, if $P_{i}$ value does not confirmed until $P_{j}$
value is determining, I will use the term {}``Variable Problem''
to the $P_{i}$, and {}``Blocking Problem'' to the $P_{j}$. And
I will use {}``$P_{j}P_{i}$'' to the problem that compute $P_{i}$
after computed $P_{j}$. The value or some condition of $P_{j}P_{i}$
is {}``$P_{j}P_{i}!$''. If I assume a certain value or some condition
of $P_{j}$, I will use {}``$P_{j}?$''. I will use {}``$\left[P_{i}\right]$''
to the some blocking problem of $P_{i}$, and {}``$\left[P_{i}\right]P_{i}$''
to the problem that compute $P_{i}$ after computed $\left[P_{i}\right]$. 

Furthermore, $\left[P_{i}\right]$ may be variable problem. the case
that $\left[P_{i}\right]$ is variable problem, $\left[P_{i}\right]P_{i}$
is also variable problem. The blocking problem of $\left[P_{i}\right]P_{i}$
is $\left[\left[P_{i}\right]\right]=\left[P_{i}\right]^{2}$.

\label{def:Combined Problem}I will use the term {}``Combined Problem''
and {}``$CP$'' to the issues covered in the following discussion.
Combined problem is the problem that combines some variable problems
in a complexity class. I will use the term {}``Element Problem''
and {}``$CP=\left\{ P_{0},P_{1},\cdots,P_{k-1}\right\} $'' to the
variable class. I will use {}``k'' to the total number of element
problems. Satisfiability of $P$ decide the value of $CP$. The combined
problem's value is the satisfaction of the element problems. I will
use the term {}``$V$'' to the truth value assignment of $CP$.
And I add number to each truth value assignment like $VT=\left\{ V^{0},V^{1},\cdots,V^{2^{k-1}}\right\} $.\end{defn}
\begin{example}
Parity problem of Blocking problems' true or false is variable problem.
These are four type, true is even, true is odd, false is even, false
is odd.\end{example}
\begin{defn}
\label{def:Depend Relation}\label{def:Depend Path}I will use the
term {}``Depend Relation'' and {}``$\left[P_{i}\right]\rightarrow P_{i}$''
to the relation of $\left[P_{i}\right]P_{i}$. And I will use the
term {}``Depend Path'' and {}``$\left[P_{i}\right]^{n}\rightsquigarrow P_{i}$''
to the transitive depend relation $\left[P_{i}\right]^{n}\rightarrow\left[P_{i}\right]^{n-1}\rightarrow\cdots\rightarrow P_{i}$,
and {}``$\left\{ \left[P_{i}\right]^{n}\rightsquigarrow P_{i}\right\} $''
to the set of the problems that include $\left[P_{i}\right]^{n}\rightsquigarrow P_{i}$.
For simplicity, the depend path is partial order.

I will use the term {}``Rotate Path'' to $P_{i}\rightsquigarrow P_{i}$.
And I will use {}``$\left[P_{i}\right]^{n}?\left\{ \left[P_{i}\right]^{n}\rightsquigarrow P_{i}\right\} !$''
to the computation that assume $\left[P_{i}\right]^{n}?$ and compute
$\left[P_{i}\right]^{n}\rightsquigarrow P_{i}$ and $P_{i}!$.

I will use the term {}``Depend Path Length'' and {}``$L\left(\left[P_{i}\right]^{n}\rightsquigarrow P_{i}\right)$''
to the maximum number of the depend relations in the single chain
of $\left[P_{i}\right]^{n}\rightsquigarrow P_{i}$.\end{defn}
\begin{thm}
\label{thm:Depend Relation and Motion Configuration}VTM that compute
$P_{i}!$ share the result of the VTM that compute $\left[P_{i}\right]!$.
If UTM can not record value of $\left[P_{i}\right]!$, UTM must execute
$\left[P_{i}\right]!$ VTM and $P_{i}!$ VTM in pararrel. And UTM
can not record $\left[P_{i}\right]!$ VTM and $P_{i}!$ VTM into the
same space. \end{thm}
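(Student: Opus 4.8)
The plan is to decompose the statement into its three assertions and dispatch each in turn, drawing on Definition~\ref{def:Variable Problem} for the dependency semantics and on the Independency theorem (Theorem~\ref{thm:Independency of VTM}) for the space and scheduling constraints.

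First I would establish the sharing claim. By Definition~\ref{def:Variable Problem}, $P_{i}$ is a variable problem whose value is not confirmed until the blocking problem $\left[P_{i}\right]$ is determined. Hence to produce $P_{i}!$ the emulated VTM cannot finalize its value without the outcome $\left[P_{i}\right]!$, which is exactly what it means for the $P_{i}!$ VTM to share the result of the $\left[P_{i}\right]!$ VTM. I would phrase this as the observation that the depend relation $\left[P_{i}\right]\rightarrow P_{i}$ of Definition~\ref{def:Depend Relation} is, at the level of emulation, precisely a sharing-of-result relation between the two VTMs.

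Next, for the parallelism claim I would argue contrapositively from the hypothesis that the UTM cannot record the value $\left[P_{i}\right]!$. If the UTM could store that value, it could emulate the $\left[P_{i}\right]!$ VTM to completion, write down the result, discard that VTM, and only then launch the $P_{i}!$ VTM to read the stored value --- a purely sequential schedule. The assumption removes this option: with nowhere to record $\left[P_{i}\right]!$, the $P_{i}!$ VTM can obtain the result only directly from a still-live $\left[P_{i}\right]!$ VTM, so both moving configurations must coexist. Since the $P_{i}!$ VTM thus needs the information and result of the $\left[P_{i}\right]!$ VTM, Theorem~\ref{thm:Independency of VTM} forces the two to execute in parallel. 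The same-space claim is then immediate from the first sentence of Theorem~\ref{thm:Independency of VTM}: moving configurations executed in parallel must be recorded in different space, so the UTM cannot place the $\left[P_{i}\right]!$ VTM and the $P_{i}!$ VTM in one space.

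I expect the main obstacle to be the middle step, namely justifying rigorously that ``cannot record the value'' forces genuine parallelism rather than some cleverer reuse of space. The delicate point is ruling out a tail-recursive schedule in which the predecessor VTM's space is reclaimed before the successor consumes the result; this is exactly the case excluded in the proof of Theorem~\ref{thm:Independency of VTM}, so I would lean on that exclusion together with the acyclicity and well-foundedness of the computation history from Theorem~\ref{thm:Structure of TM} to guarantee that the result of $\left[P_{i}\right]!$ must still be available at the moment $P_{i}!$ consumes it, and hence that the producing VTM cannot have vacated its space.
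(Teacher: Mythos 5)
Your proposal is correct relative to the paper's framework and follows essentially the same route as the paper's own proof: the paper likewise derives the sharing claim from the blocking/variable-problem definitions, argues that the inability to record $\left[P_{i}\right]!$ forces the $\left[P_{i}\right]!$ computation to occur while computing $P_{i}!$ (so that $P_{i}!$ needs the result and $\left[P_{i}\right]!$ needs the timing, i.e.\ mutual sharing), and leaves the parallel-execution and distinct-space conclusions to Theorem~\ref{thm:Independency of VTM}. Your version simply spells out, more explicitly than the paper's three-sentence argument, the exclusion of the sequential (store-and-discard) schedule and the final appeal to that theorem.
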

\begin{proof}
If UTM can not record all $\left[P_{i}\right]!$, VTM must compute
$\left[P_{i}\right]!$ when compute $P_{i}!$. $P_{i}!$ need $\left[P_{i}\right]!$
and $\left[P_{i}\right]!$ need the timing to compute $\left[P_{i}\right]!$.
Therefore, $\left[P_{i}\right]!$ and $P_{i}!$ is necessary to share
the information each other.\end{proof}
\begin{thm}
\label{thm:CP as set}We can treat $CP$ as the family of the family
$P$ of the set $V$ of the $P$ that value is true. And $CP$ is
not well-formed set because of cyclic of transitive relation.\end{thm}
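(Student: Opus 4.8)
The plan is to prove the two assertions of the statement separately, in both cases reusing the correspondence between computation structures and sets established in Theorem~\ref{thm:Structure of TM}. First I would fix the encoding. Recall that a truth value assignment $V\in VT$ records, for each element problem $P_i$ of $CP=\left\{P_0,P_1,\cdots,P_{k-1}\right\}$, whether $P_i$ is true. I would identify each $V^j$ with the set $\left\{P_i : P_i\text{ is true under }V^j\right\}$, so that an assignment becomes a set of element problems. Grouping these sets according to which element problem $P$ they certify, and then collecting those groups over all of $CP$, yields exactly the three-level object named in the statement: the innermost layer is the set $V$ of the problems $P$ whose value is true, the middle layer is the family of such $V$ attached to a given $P$, and the outermost layer is $CP$ itself as the family of these families. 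Establishing the first assertion is then a matter of checking that this nesting faithfully records the satisfaction data, which is immediate from Definition~\ref{def:Combined Problem}, since the satisfiability of $P$ decides the value of $CP$.

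For the second assertion I would exploit the contrast with the acyclic case. In Theorem~\ref{thm:Structure of TM} the computation history is a DAG with no cyclic part, and precisely for that reason the associated set is well-founded with the target configurations as minimal elements. Here the organizing relation is instead the depend relation $\left[P_i\right]\rightarrow P_i$, whose transitive closure is the depend path $\left[P_i\right]^n\rightsquigarrow P_i$. I would show that the membership relation of the encoding above is generated by this depend relation: the set standing for $\left[P_i\right]$ must be available before the set standing for $P_i$ can be formed, because by Theorem~\ref{thm:Depend Relation and Motion Configuration} the value $P_i!$ shares the result of $\left[P_i\right]!$. Hence $\left[P_i\right]$ occurs inside $P_i$ in the encoding, and a depend path becomes a descending chain of the membership relation.

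The crux is the rotate path. Definition~\ref{def:Depend Path} introduces $P_i\rightsquigarrow P_i$, a depend path from $P_i$ back to itself, and although an individual depend path is taken to be a partial order ``for simplicity,'' a combined problem may splice several variable problems so that $\left[P_i\right]$ is itself a variable problem with blocking problem $\left[P_i\right]^2$, and so on, eventually closing the chain. I would exhibit such a configuration explicitly (for instance the parity-type variable problems of the Example, whose truth values feed back into one another) and observe that the resulting rotate path $P_i\rightsquigarrow P_i$ transports, under the encoding, to a membership cycle in which $P_i$ reaches itself. A set admitting such a cycle cannot be well-founded, since a well-founded set forbids any descending membership chain; therefore $CP$ is not a well-formed set.

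The step I expect to be the main obstacle is this faithful transport of the rotate path to a genuine membership cycle. The notation $\rightsquigarrow$ asserts only a transitive chain of depend relations, and it must be argued that such a chain cannot be ``unfolded'' into an acyclic DAG the way a computation history in Theorem~\ref{thm:Structure of TM} always can. The whole point is that the feedback inside a combined problem---a variable problem whose blocking problem depends back on it---is irreducibly cyclic, so no acyclic re-encoding exists. Pinning down why the rotate path is unavoidable rather than an artifact of a particular presentation, that is, why $CP$ genuinely carries the cycle while a TM history never does, is where the real content lies.
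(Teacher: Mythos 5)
Your first paragraph reproduces the paper's own proof of the first assertion in all essentials: the paper defines $V$ as the set of element problems assigned $\top$, attaches $V$ to $P_{i}$ exactly when the assumed and computed values agree ($P_{i}?=P_{i}?P_{i}!$), and takes $CP$ to be the family of the resulting $P_{i}$. Your ``certify'' grouping is that same consistency criterion, stated a little less explicitly, so that half is fine.

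The gap is in your treatment of the second assertion, and it is twofold. First, your transport of the rotate path into a membership cycle contradicts the encoding you have just fixed: under that encoding the elements of the set $P_{i}$ are assignments $V$, and the elements of $V$ are problems --- the blocking problem $\left[P_{i}\right]$ is \emph{not} an element of $P_{i}$. Theorem~\ref{thm:Depend Relation and Motion Configuration} asserts only that the VTM computing $P_{i}!$ must share results with the VTM computing $\left[P_{i}\right]!$; it says nothing about set membership, so ``$\left[P_{i}\right]$ occurs inside $P_{i}$ in the encoding'' is a non sequitur. If a membership cycle exists in this encoding at all, it arises from the double role of $P_{i}$ --- as an element of assignments $V$ and simultaneously as a family of assignments --- giving $P_{i}\in V\in P_{i}$ whenever $V$ assigns $P_{i}$ true and is consistent for $P_{i}$; the depend relation enters only implicitly, through the consistency condition that decides which $V$ belong to $P_{i}$. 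Second, you concede at the end that you cannot show the cycle is irreducible rather than an artifact of the presentation; that concession means the proposal does not actually establish non-well-foundedness. For comparison, the paper's own proof never argues the second assertion either: it establishes only the encoding, and the cyclicity is stipulated after the theorem (``$P$ is the part of rotate path'') and then invoked in Theorem~\ref{thm:Structure of CHAOS}. So the obstacle you flag is real and is left open by the paper itself, but as a proof of the stated theorem your proposal both relies on a membership mechanism your own encoding does not support and stops short of the claim it needs.
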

\begin{proof}
If we decide $\left[P_{i}\right]?$ to some $V?$, $V?$ is $P_{i}?=P_{i}!$
or $P_{i}?\neq P_{i}!$. Therefore, $P_{i}$ classify $V$ into $P_{i}?=P_{i}?P_{i}!$
or $P_{i}?\neq P_{i}?P_{i}!$. If we define $V$ as the set that include
$P_{i}?=\top$ of $\left[P_{i}\right]?$, and $P_{i}$ as the set
that include $V$ of $P_{i}?=P_{i}?P_{i}!$, $CP$ is the family of
$P_{i}$.
\end{proof}
For simplification, I will define $CP$ as follows. $P$ is the part
of rotate path. $CP$ is efficient and do not have redundant. Therefore,
all $P$ has $V$ that only $P$ is conflict. And $CP$ like $P_{i}\in CP\not\ni P_{j}\in\left[P_{i}\right]$
is exist. Such $CP$ have no limitation with $P_{j}$, therefore $CP$
can take $P_{j}!=\top$ or $P_{j}!=\bot$.

\section{$NP\supsetneq AL=P$}

Using the problem that's all part depends on whole, I show $NP\supsetneq AL=P$.
\begin{defn}
\label{def:CHAOS}I will use the term {}``CHAOS'' to the combined
problem that made the following element problems.

$P_{i}\in ClassNP$

$\left[P_{i}\right]=CP$
\end{defn}
I prove $NP\supsetneq AL$ by using CHAOS with $NP\ni CHAOS$ and
$AL\not\ni CHAOS$. 
\begin{thm}
$NP\ni CHAOS$\end{thm}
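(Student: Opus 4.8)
The plan is to exploit nondeterminism to collapse the rotate path $P_{i}\rightsquigarrow P_{i}$ that arises in CHAOS because $\left[P_{i}\right]=CP$. Since a combined problem has only $k$ element problems, the set of truth value assignments $VT=\left\{ V^{0},V^{1},\cdots\right\} $ has cardinality $2^{k}$, so an assignment is described by merely $k$ bits. The central idea is that an NTM can guess one assignment $V\in VT$ all at once, thereby fixing a candidate value $P_{i}?$ for every element problem simultaneously, and then verify that this guess is a consistent fixed point of the dependency. This is exactly the computation the paper already names $\left[P_{i}\right]^{n}?\left\{ \left[P_{i}\right]^{n}\rightsquigarrow P_{i}\right\} !$: assume the blocking values, follow the depend path, and compute $P_{i}!$.

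First I would specify the nondeterministic branch. In a single branch the NTM nondeterministically writes a candidate assignment $V$, namely a value $P_{i}?\in\left\{ \top,\bot\right\} $ for each element problem. By Theorem \ref{thm:Sharing of the information of VTM}, because all $k$ element problems must share this one common guess, the entire evaluation proceeds inside that single branch; hence the guessed values of the blocking problem $\left[P_{i}\right]=CP$ are available to every $P_{i}$ as shared information, and no conflict with Theorem \ref{thm:Depend Relation and Motion Configuration} arises, since nothing needs to be recomputed in parallel in separate space.

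Next I would verify the guess. For each $P_{i}$ we have $P_{i}\in ClassNP$, so given the assumed value of $CP$ read off from $V$, there is a nondeterministic polynomial-time verifier producing $P_{i}!$. The branch accepts precisely when the computed $P_{i}!$ agrees with the guessed $P_{i}?$ for every $i$, i.e. when $V$ is a genuine fixed point of the rotate path. A correct fixed point is accepted along some branch, while every inconsistent assignment is rejected, and by Theorem \ref{thm:Structure of TM} the NTM's acceptance depends only on the set of target configurations reached; so CHAOS is decided by existential quantification over $V$ followed by polynomial verification, placing it in $NP$.

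The step I expect to be the main obstacle is justifying that guess-and-check genuinely resolves the cycle rather than merely hiding it: the rotate path means no element problem can be scheduled before the others, so a deterministic ordering is impossible, and I must argue that fixing all blocking values at once removes the circularity without changing the decided language. The delicate points are (i) confirming that $k$, the guessed-bit count, and each per-element verification stay polynomially bounded in $\left|w\right|=O\left(n\right)$, so the whole computation remains in $NP$; and (ii) showing that the fixed-point consistency condition is exactly the correct acceptance criterion, so that accepting branches correspond precisely to valid evaluations of CHAOS and no spurious assignment is accepted.
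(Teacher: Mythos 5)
Your proposal takes essentially the same approach as the paper's proof: nondeterministically guess the value $P_{i}?$ for every element problem at once, then verify $\forall i\left(\left[P_{i}\right]?P_{i}!=P_{i}?\right)$, accepting exactly when the guess is a consistent fixed point of the depend relation. The extra bookkeeping you supply (single-branch information sharing via Theorem \ref{thm:Sharing of the information of VTM}, and the polynomial bound on the number of guessed bits) only elaborates the paper's own $O\left(n\right)$-time guess-and-check argument.
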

\begin{proof}
NTM can compute CHAOS to choose $P_{i}?$ in nondeterministic and
check $\forall i\left(\left[P_{i}\right]?P_{i}!=P_{i}?\right)$. And
UTM use $O\left(n\right)$ time to compute the choose of $P_{i}?$
and $P_{i}$, and compute $P_{i}!$ and compare $P_{i}?$ and $P_{i}!$.
So $NP\ni CHAOS$.
\end{proof}
I extend CHAOS and prove $AL\not\ni CHAOS$.
\begin{thm}
\label{thm:Structure of CHAOS}If we treat $CP$ as mentioned above
\ref{thm:CP as set}, CHAOS is the problem that decide $\bigcap CP=P_{0}\cap P_{1}\cap\cdots\cap P_{k-1}\neq\emptyset\leftrightarrow CP\in CHAOS$.
And CHAOS is not well-formed set.\end{thm}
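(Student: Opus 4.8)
The plan is to unpack the set-theoretic reading of $CP$ supplied by Theorem~\ref{thm:CP as set} and then match it against the acceptance predicate of the NTM built in the proof of $NP\ni CHAOS$. First I would fix the interpretation mandated by that theorem: a truth value assignment $V\in VT$ is read as the set of element problems it sends to $\top$; each element problem $P_i$ is read as the set $\{V : P_i?=[P_i]?P_i!\}$ of assignments on which the assumed value of $P_i$ agrees with the value obtained by computing $P_i$ from the blocking data $[P_i]=CP$ that $V$ fixes; and $CP=\{P_0,\dots,P_{k-1}\}$ is the family of these sets. Under this reading each $P_i$ is exactly the collection of assignments that are locally self-consistent at coordinate $i$, which is what lets us speak of $\bigcap CP$ at all.

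Next I would compute the intersection. Since membership of $V$ in $P_i$ is the single-coordinate consistency condition $P_i?=[P_i]?P_i!$, an assignment $V$ lies in $\bigcap CP=P_0\cap P_1\cap\cdots\cap P_{k-1}$ if and only if it satisfies $\forall i\,([P_i]?P_i!=P_i?)$ simultaneously. But this conjunction is verbatim the predicate the CHAOS-NTM verifies after guessing each $P_i?$, and that machine accepts precisely when some guessed $V$ survives the check. Hence $\bigcap CP\neq\emptyset$ holds exactly when the NTM has an accepting branch, i.e.\ when $CP\in CHAOS$, which yields the biconditional $\bigcap CP\neq\emptyset\leftrightarrow CP\in CHAOS$.

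For the second assertion I would appeal directly to the cyclic structure forced by the definition of CHAOS. Because $[P_i]=CP$ for every $i$ while each $P_i\in CP$, every element problem depends on the whole family to which it belongs; reading $\rightarrow$ as the depend relation, this produces a rotate path $P_i\rightsquigarrow P_i$ running through $CP$ for every coordinate. Theorem~\ref{thm:CP as set} already records that such a cyclic transitive relation prevents $CP$ from being a well-formed set, and CHAOS realizes this in the strongest possible form, since every coordinate closes its own cycle; therefore CHAOS is not a well-formed set.

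I expect the main obstacle to be the identification in the second step of the intersection condition with the machine's acceptance predicate, because the value $[P_i]?P_i!$ is defined only relative to an assignment of the \emph{whole} $CP$, so the argument must honour the self-reference without covertly assuming a well-founded evaluation order. This is precisely why the non-well-founded, family-of-families encoding of Theorem~\ref{thm:CP as set} is indispensable: it is what makes the simultaneous fixed-point reading of $\bigcap CP$ coherent in the presence of the rotate paths, and so the two halves of the statement are proved by the same underlying device.
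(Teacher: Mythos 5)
Your proposal matches the paper's proof in both structure and substance: like the paper, you unpack the family-of-sets encoding of Theorem~\ref{thm:CP as set} so that $V\in P_{i}$ expresses the local consistency $P_{i}?=\left[P_{i}\right]?P_{i}!$, identify $\bigcap CP\neq\emptyset$ with the existence of a globally consistent assignment (hence with $CP\in CHAOS$), and obtain non-well-formedness from the fact that the CHAOS condition $\left[P_{i}\right]=CP$ leaves the cyclic depend relation intact. The only cosmetic difference is that you phrase $CP\in CHAOS$ via the acceptance predicate of the NTM from the $NP\ni CHAOS$ theorem, whereas the paper appeals directly to the definition of a combined problem's value as satisfaction of its element problems; the underlying argument is the same.
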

\begin{proof}
If we treat $CP$ as set, $V$ that include $P$ means consistent
value $P?$. And if all $P$ include same $V$, $CP$ consistent at
$V$. Therefore, $CP$ satisfy $\bigcap CP=P_{0}\cap P_{1}\cap\cdots\cap P_{k-1}\neq\emptyset$.
And condition of CHAOS can not remove the cyclic of $CP$, therefore
CHAOS is not well-formed set.\end{proof}
\begin{thm}
$AL\not\ni CHAOS$\end{thm}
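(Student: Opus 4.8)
```latex
The plan is to show that no logarithmic-space alternating Turing machine (LATM) can decide CHAOS, by exploiting the structural fact established in Theorem \ref{thm:Structure of CHAOS} that CHAOS is not a well-formed set, together with the space-accounting results of Section 2. The strategy is a contradiction argument: assume $AL \ni CHAOS$, so that some LATM decides CHAOS, and derive a conflict with the independence requirement on parallel VTMs from Theorem \ref{thm:Independency of VTM}.

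First I would unwind what it means for the LATM to decide CHAOS. By Definition \ref{def:CHAOS}, each element problem $P_i$ has blocking problem $\left[P_i\right]=CP$, so every $P_i$ depends on the entire combined problem $CP$, which in turn depends back on each $P_i$. This is precisely a rotate path $P_i \rightsquigarrow P_i$ in the sense of Definition \ref{def:Depend Path}: computing $P_i!$ requires the result $\left[P_i\right]! = CP!$, and $CP!$ requires every $P_j!$. By Theorem \ref{thm:Depend Relation and Motion Configuration}, whenever the UTM cannot record the value $\left[P_i\right]!$ outright, it must run the $\left[P_i\right]!$ VTM and the $P_i!$ VTM in parallel and, crucially, in different space. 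So the core of the argument is to count how many such moving configurations must coexist.

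Next I would combine this with Theorem \ref{thm:Structure of TM}, which says an ATM's computation history corresponds to a \emph{family} of target-configuration sets realized by the alternating universal and existential states. The key step is to argue that because CHAOS is not well-formed (the transitive depend relation is cyclic, by Theorems \ref{thm:CP as set} and \ref{thm:Structure of CHAOS}), the LATM cannot break the dependence by committing to a fixed assignment and verifying it: any attempt to fix $P_i?$ leaves $CP$ still depending on all of $P_0,\dots,P_{k-1}$, so the machine cannot record the resolved values and discard the VTMs. Hence all $k$ element-problem VTMs along the rotate path must be held in parallel simultaneously, and by Theorem \ref{thm:Independency of VTM} each demands its own distinct working memory region. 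Since $k$ grows with the instance size and each moving configuration — in particular each head position — costs at least logarithmic space by Theorem \ref{thm:Emulation of TM by UTM}, the total working memory required is $\Omega(k \log n)$, which exceeds the $O(\log n)$ budget of an LATM.

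The hard part will be making rigorous the claim that alternation provides no escape. A natural worry is that the universal/existential branching of an LATM might let it resolve the cyclic dependence the way it resolves, say, game-tree or $QBF$-style problems within logarithmic space, so I must show that CHAOS genuinely forces simultaneous (parallel) retention of all $k$ results rather than sequential guess-and-check. This is exactly where the non-well-foundedness matters: Theorem \ref{thm:Sharing of the information of VTM} tells us that VTMs in distinct nondeterministic branches cannot share results, so sharing across the rotate path must occur within a single branch, and Theorem \ref{thm:Independency of VTM} then forbids overlaying those shared VTMs in common space. I would argue that the cyclic structure prevents any topological ordering of the element problems (no partial order refinement exists for a rotate path), so no schedule lets the machine finish and forget one $P_i$ before it is needed again; all $k$ moving configurations are live at once. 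This yields the $\Omega(k \log n) = \omega(\log n)$ lower bound and contradicts $AL \ni CHAOS$, completing the proof.
```
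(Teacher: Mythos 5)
Your proposal is a valid argument at the paper's own level of rigor, but it takes a genuinely different route through the middle. Both proofs are reductio arguments that pivot on Theorem \ref{thm:Structure of CHAOS} (CHAOS is not well-formed) and both terminate in a space count exceeding the logarithmic budget; the difference is in what gets counted. The paper argues structurally: the assumed LATM must be related to CHAOS through an LDTM (the reduction), the composed computation history is well-founded by Theorem \ref{thm:Structure of TM}, so the machines would have to create minimal elements to realize the non-well-founded CHAOS, and the paper then enumerates exactly two ways to do that --- record all element values $P?$ (space $k=\sqrt{n}>\lg(n)$) or record all assignments $V?$ (space $2^{\sqrt{n}}>\lg(n)$) --- both unavailable in log space. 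You argue operationally instead, using the Section 2 machinery that the paper develops but never actually cites in this proof: the rotate path $[P_{i}]=CP$ together with Theorems \ref{thm:Depend Relation and Motion Configuration} and \ref{thm:Independency of VTM} forces all $k$ element-problem VTMs to stay live in parallel in disjoint space, each costing $\Omega(\lg n)$ by Theorem \ref{thm:Emulation of TM by UTM}, for a total of $\Omega(k\lg n)$. What your route buys: it puts the otherwise-orphaned VTM-independence results to work, and it explicitly confronts the natural objection that alternation might resolve the cycle QBF-style, via Theorem \ref{thm:Sharing of the information of VTM} --- an objection the paper's proof never addresses. What the paper's route buys: its case b) covers the alternative strategy of enumerating truth assignments $V?$ rather than element problems, a possibility your parallel-retention count does not consider, and it charges the space cost to the reducing LDTM as well as to the LATM. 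Be aware that both arguments rest on the same unproved core step --- the claim that no scheduling of the cyclic dependence lets the machine resolve and discard element problems piecemeal within $O(\lg n)$ space --- so your proposal stands or falls exactly where the paper's own proof does.
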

\begin{proof}
We assume that LATM can compute the CHAOS. But the assumption contradict
with CHAOS and we can see $AL\not\ni CHAOS$. 

From assumptions, there is a mapping from CHAOS to LATM. But this
mapping must relate CHAOS and LATM by using LDTM. Therefore, composition
of LDTM and LATM (of computation history) must make CHAOS structure.
And as mentioned above \ref{thm:Structure of TM}, LDTM and LATM is
well-formed, therefore CHAOS structure made from LDTM and LATM must
be well-formed.

But as mentioned above \ref{thm:Structure of CHAOS}, CHAOS is not
well-formed. If we want to treat CHAOS as well-formed structure, we
must treat some elements as minimal element and remove the cyclic
of transitive relation. And CHAOS does not include minimal elements,
LDTM must create minimal elements and LDTM or LATM must record these
elements. To remove the cyclic of transitive relation, we can use
two ways a) all $P$ include $V$ change to $P?$, and b) all $V$
include $P$ change to $V?$. a) need space as $P$ cardinality $k=\sqrt{n}>\lg\left(n\right)$.
b) need space as power set of $P$ cardinality $2^{\sqrt{n}}>\lg\left(n\right)$.
LDTM and LATM does not have a) or b) space and can not remove the
cyclic. Therefore, we can not make CHAOS structure by using composition
of LDTM and LATM.

From the above, the assumption that LATM can compute CHAOS contradict
with LATM and LDTM condition. Therefore, we can say from the reductio
ad absurdum that LATM can not compute CHAOS, and $AL\not\ni CHAOS$.\end{proof}
\begin{thm}
$NP\supsetneq AL$\end{thm}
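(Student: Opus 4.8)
The plan is to combine the two preceding theorems directly, since the separation $NP \supsetneq AL$ follows from the standard containment $AL \subseteq NP$ together with the existence of a witness problem lying in $NP$ but not in $AL$. First I would invoke the inclusion $AL \subseteq NP$, which is the well-known result (from the standard complexity hierarchy) that any problem decidable by a logarithmic-space alternating Turing machine is decidable by a nondeterministic polynomial-time machine; this gives $NP \supseteq AL$. Then I would use the CHAOS problem as the separating witness: by the theorem $NP \ni CHAOS$, the problem lies inside $NP$, while by the theorem $AL \not\ni CHAOS$, it lies outside $AL$. The combination of a containment and a strict witness immediately upgrades $\supseteq$ to $\supsetneq$.

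Concretely, I would argue by contradiction on the equality case. Suppose $NP = AL$. Then since $CHAOS \in NP$, we would have $CHAOS \in AL$, contradicting the theorem that $AL \not\ni CHAOS$. Hence $NP \neq AL$, and combined with $AL \subseteq NP$ this yields $NP \supsetneq AL$. The entire argument is essentially a two-line corollary assembling the already-established facts, so no genuinely new machinery is required.

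The main obstacle is not in this final step but is inherited from the earlier work: the whole separation rests on the theorem $AL \not\ni CHAOS$, whose proof relies on the claim that CHAOS has a non-well-founded (cyclic) dependency structure that cannot be realized by the composition of the well-founded computation histories of an LDTM and an LATM. If that argument is sound, the present statement is immediate; if the well-foundedness/space-counting argument for $AL \not\ni CHAOS$ has a gap, this theorem inherits it. I would therefore flag that the load-bearing content lives entirely in Theorem \ref{thm:Structure of CHAOS} and the subsequent $AL \not\ni CHAOS$ proof, and that the inclusion $AL \subseteq NP$ should be cited explicitly as a known result rather than re-derived here.
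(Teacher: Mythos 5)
Your proposal is correct and takes essentially the same route as the paper's own proof, which likewise combines $NP\ni CHAOS$, $AL\not\ni CHAOS$, and the known containment $NP\supset P=AL$ to conclude $NP\supsetneq AL$. Your observation that the entire logical weight rests on the earlier theorem $AL\not\ni CHAOS$ (and its well-foundedness argument) rather than on this assembling step is also accurate.
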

\begin{proof}
$NP\ni CHAOS$, $AL\not\ni CHAOS$, and $NP\supset P=AL$, thus we
see $NP\supsetneq AL=P$.
\end{proof}

\section{$AL=P\supsetneq NC$}

Using the problem that's linear order structure, I show $NP\supsetneq AL$.
\begin{defn}
\label{def:ORDER}I will use the term {}``ORDER'' to the CHAOS that
made the following element problems.

$P_{i}\in ClassP$

$\left[P_{i\neq0}\right]=\left\{ P_{j}\mid j<i\right\} $
\end{defn}
I prove $P\supsetneq NC$ by using CHAOS with $P\ni ORDER$ and $NC\not\ni ORDER$. 
\begin{thm}
$P\ni ORDER$\end{thm}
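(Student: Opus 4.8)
The plan is to exploit the one feature that sharply distinguishes ORDER from CHAOS: its depend relation is a genuine linear order rather than a rotate path. Because $[P_{i\neq 0}]=\{P_j\mid j<i\}$, every depend relation $[P_i]\rightarrow P_i$ strictly decreases the index, so there is no rotate path $P_i\rightsquigarrow P_i$, and the associated structure is well-founded with minimal element $P_0$ (which has no blocking problem). This is exactly the situation in which a deterministic emulation can resolve all the variable problems by visiting them in topological order, which here is simply increasing-index order.

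First I would describe a UTM that emulates ORDER by computing the element problems sequentially. Begin with $P_0$, whose value $P_0!$ is determined with no assumption since $P_0$ has no blocking problem; then for each $i=1,\ldots,k-1$ in turn, compute $P_i!$ using the already-recorded values $P_0!,\ldots,P_{i-1}!$ as the resolved blocking result $[P_i]!$. By Theorem~\ref{thm:Depend Relation and Motion Configuration}, the only reason two VTM are forced to run in parallel, and hence into distinct space, is that neither value can be recorded before the other is needed. Here that obstruction is absent: along the linear order every blocking value is already recorded by the time $P_i$ is reached, so the element problems can be emulated one after another, reusing the same working space for each $P_i!$ while retaining only the recorded predecessor values.

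Next I would bound the cost. Each $P_i\in ClassP$, so given the recorded values of its predecessors it is decided by the UTM in time polynomial in $n$. There are $k$ element problems, and since $k$ is polynomially bounded in $n$ (as in the $k=\sqrt{n}$ regime used earlier), the total running time is $k\cdot\mathrm{poly}(n)=\mathrm{poly}(n)$; the stored results $P_0!,\ldots,P_{k-1}!$ comprise only $k$ values, which a polynomial-time machine keeps without difficulty. Hence ORDER is decided in polynomial time and $P\ni ORDER$.

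The main obstacle I anticipate is not the running-time arithmetic but justifying cleanly that the well-founded, acyclic structure genuinely permits a single sequential schedule, i.e.\ that the absence of a rotate path is precisely what makes each $[P_i]!$ \emph{recorded} (not merely assumed) at the moment $P_i$ is computed. This is the mirror image of the CHAOS argument of Theorem~\ref{thm:Structure of CHAOS}: there the cycle forced a non-well-founded structure and blocked any one-pass resolution, whereas here the strict index inequality in the definition of $[P_{i\neq 0}]$ guarantees a minimal element $P_0$ and a consistent topological evaluation. I would therefore state the topological-order claim explicitly as a consequence of that strict inequality, so that no hidden cyclic dependence can slip in.
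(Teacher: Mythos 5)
Your proof rests on a misreading of Definition \ref{def:ORDER}, and what it drops is precisely the piece the paper's own proof is built around. ORDER is declared to be ``the CHAOS that made the following element problems,'' and the definition only overrides the blocking problems of the non-zero indices: $\left[P_{i\neq0}\right]=\left\{ P_{j}\mid j<i\right\} $. Nothing is said about $\left[P_{0}\right]$, so it is inherited from CHAOS (Definition \ref{def:CHAOS}), where $\left[P_{i}\right]=CP$. Hence $P_{0}$ is still a variable problem blocked by the other element problems, and the structure does contain a rotate path $P_{0}\rightsquigarrow P_{0}$: each $P_{i\neq0}$ waits on the smaller indices, while $P_{0}$ in turn waits on the rest of $CP$. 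This is confirmed by the paper's simplification of $CP$ (``$P$ is the part of rotate path'') and by Definition \ref{def:LAYER}, which refines ORDER and explicitly keeps $\left[P_{0}\right]=\left\{ P\right\} _{j\neq0}$. So your claims that every depend relation strictly decreases the index, that there is no rotate path, and that $P_{0}$ has no blocking problem are all false for ORDER; the failure of well-foundedness is the entire point of these combined problems.

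Consequently the machine you describe decides a different, trivial problem: with $\left[P_{0}\right]=\emptyset$ every value is forced by a single topological sweep and there is no satisfiability question left to answer. The paper's proof instead has to break the cycle by hypothesis: it runs the sequential sweep under each of the two assumptions $P_{0}?=1$ and $P_{0}?=0$, computes $\left[P_{i}\right]P_{i}!$ in increasing index order under that assumption, and finally performs the consistency check $P_{0}?\left\{ P_{0}\rightsquigarrow P_{0}\right\} !=P_{0}?$ around the rotate path, accepting iff one of the two assumptions closes consistently. Your sequential-scheduling idea and the polynomial-time accounting would fit into that proof, but without the two-case assumption on $P_{0}?$ and the final rotate-path consistency check, your argument never addresses the condition $\bigcap CP\neq\emptyset$ of Theorem \ref{thm:Structure of CHAOS} that ORDER actually asks about.
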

\begin{proof}
UTM can compute ORDER by using this operation; both case of $P_{0}?=1$
and $P_{0}?=0$, UTM compute $\left[P_{i}\right]P_{i}!$ from smaller
number, and check $P_{0}?\left\{ P_{0}\rightsquigarrow P_{0}\right\} !=P_{0}?$.
And UTM use $O\left(n\right)$ time and space to compute all $\left[P_{i}\right]P_{i}!$.
So $P\ni ORDER$.\end{proof}
\begin{thm}
$NC\not\ni ORDER$\end{thm}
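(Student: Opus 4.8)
The plan is to follow the same template as the proof of $AL\not\ni CHAOS$ --- assume $NC\ni ORDER$ and derive a structural contradiction --- but with one essential change in the obstruction. Because the depend relation of ORDER is the strict linear order $[P_{i\neq0}]=\{P_j\mid j<i\}$, ORDER contains no rotate path $P_i\rightsquigarrow P_i$, and by \ref{thm:CP as set} it is the cyclic transitive relation alone that destroys well-formedness; so the ORDER structure is well-formed, which is exactly why $P\ni ORDER$. Hence I cannot argue, as for CHAOS, that ORDER fails to be well-founded. Instead the obstruction must come from the \emph{length} of the depend path: $NC$ is the subclass of $P=AL$ whose computation histories have only polylogarithmic height (uniform circuits of polynomial size and depth $O(\lg^c n)$, equivalently an LATM restricted simultaneously to $O(\lg n)$ space and $O(\lg^c n)$ time), whereas $P$ itself permits polynomial height.

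First I would make the critical path of ORDER explicit. Since each $P_i$ is blocked by every earlier element problem, the single chain $P_{k-1}\rightarrow P_{k-2}\rightarrow\cdots\rightarrow P_1\rightarrow P_0$ is a depend path with depend path length $L([P_{k-1}]^{k-1}\rightsquigarrow P_0)=k-1$. Inheriting $k=\sqrt{n}$ from the CHAOS construction (any polynomial $k$ would do), this gives $L=\Theta(\sqrt{n})$. Then comes the heart of the argument: by \ref{thm:Depend Relation and Motion Configuration} the VTM computing $P_i!$ shares, and so must wait for, the result $[P_i]!$ of the blocking VTM, so a blocking problem and its variable problem can never be collapsed into a single parallel layer. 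Propagating this along the whole chain, each of the $k-1$ depend relations forces one more sequential layer in any computation history that realizes ORDER, so by \ref{thm:Structure of TM} the height of that history is at least $L=\Theta(\sqrt{n})$.

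Finally I would close the contradiction: an $NC$ computation has height $O(\lg^c n)$ for some constant $c$, but $\lg^c n=o(\sqrt{n})$, so for large $n$ the required height $\Theta(\sqrt{n})$ cannot be met. This refutes $NC\ni ORDER$ and gives $NC\not\ni ORDER$, whence with $P\ni ORDER$ we obtain $P\supsetneq NC$.

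The step I expect to be the main obstacle is the middle one: showing that the sequential depend path cannot be shortcut by parallelism, i.e. that the polynomially many simultaneous processors of $NC$ still cannot evaluate a chain of data dependencies faster than its length. This is precisely the crux that separates $P$ from $NC$, so I would have to lean hard on \ref{thm:Depend Relation and Motion Configuration}: because each $P_i!$ genuinely needs the values $[P_i]!$ before it can even begin, no reorganization of the computation history can compress its height below the depend path length $L$.
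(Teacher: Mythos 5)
Your overall shape (assume $NC\ni ORDER$, exhibit the linear chain $P_{k-1}\rightarrow P_{k-2}\rightarrow\cdots\rightarrow P_{0}$, conclude the computation cannot be flattened to polylogarithmic height) matches the paper's intent, but the step you yourself flag as the main obstacle is exactly where your proposal has a genuine gap, and it is exactly the step the paper fills with a different argument. You assert that no reorganization of the computation history can compress its height below the depend path length $L$, citing \ref{thm:Depend Relation and Motion Configuration}; but that theorem only says that a blocking VTM and its variable VTM must run in parallel in disjoint space when the blocking value cannot be recorded --- it gives no lower bound on the \emph{height} of a computation history. Worse, the asserted principle is false in general: a dependency chain of linear length can often be collapsed to logarithmic depth by guess-and-verify or parallel-prefix techniques whenever the state carried across each link is small (iterated composition of functions on a constant-size domain is in $NC$ despite a length-$n$ chain). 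So chain length alone is not an obstruction, and your argument as stated proves too much.

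The paper's own proof addresses precisely this escape route, and that is its essential content: to evaluate the $P_{i}!$ in parallel, the UTM would have to assume the combination of values $\left[P_{i}\right]?$ of the blocking problems; because in ORDER each $P_{i}$ depends on \emph{all} $P_{j}$ with $j<i$ (not merely on $P_{i-1}$), the carried state is not small --- the combinations reach $O\left(2^{n}\right)$ and cannot be recorded in the $O\left(n\right)$ space available --- and only after this guessing strategy is barred does the paper conclude that the $P_{i}!$ must be computed sequentially from smaller indices, hence not in parallel. Your proposal never uses this defining feature of ORDER, and an argument that never uses it cannot distinguish ORDER from prefix-type problems that genuinely are in $NC$. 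To repair the proposal you would need to insert the paper's combination-counting step (or an equivalent argument about the size of the state that must cross each depend relation) at exactly the point where you rule out shortcuts; leaning on \ref{thm:Depend Relation and Motion Configuration} alone does not do it.
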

\begin{proof}
If $\left[P_{i}\right]!$ is variable, $\left[P_{i}\right]P_{i}$
is also variable problem and $\left[P_{i}\right]P_{i}!$ is variable.
If UTM compute each $P_{i}!$ in parallel, UTM must assume the combination
of $\left[P_{i}\right]?$. But $\left[P_{i}\right]?$ is reached to
$O\left(2^{n}\right)$ and UTM can not record into $O\left(n\right)$
space. And UTM must compute $\left[P_{i}\right]!$ to save the computing
space whenever $P_{i}!$ need $\left[P_{i}\right]!$. But UTM must
compute $P_{i}!$ sequentially from smaller numbers. So UTM can not
compute $P_{i}!$ in paralell.

From the above, $NC\not\ni ORDER$.\end{proof}
\begin{thm}
$P\supsetneq NC$\end{thm}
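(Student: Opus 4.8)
The plan is to obtain the strict separation as an immediate corollary of the two preceding theorems together with the standard unconditional inclusion $NC\subseteq P$, in exact parallel with the earlier argument that produced $NP\supsetneq AL$. The intuition is that $ORDER$ has been engineered to be the separating witness: one direction places it inside $P$, the other excludes it from $NC$, and a fixed single problem lying in $P\setminus NC$ is precisely what upgrades a containment to a proper containment. So the substantive work is already discharged in Theorem ($P\ni ORDER$) and Theorem ($NC\not\ni ORDER$), and the final statement merely assembles them.

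Concretely, I would proceed in three short steps. First, recall that any language decided by a family of polynomial-size, polylogarithmic-depth circuits is decidable in deterministic polynomial time, so $NC\subseteq P$ holds with no hypotheses; this supplies the non-strict containment. Second, invoke $P\ni ORDER$ to certify that the witness lies in $P$. Third, invoke $NC\not\ni ORDER$ to certify that the same witness lies outside $NC$. Combining these, $ORDER\in P\setminus NC$, and a language belonging to $P$ but not to $NC$ turns $NC\subseteq P$ into $P\supsetneq NC$. Written in the style of the $NP\supsetneq AL$ proof, this is simply: $P\ni ORDER$, $NC\not\ni ORDER$, and $NC\subseteq P$, hence $P\supsetneq NC$.

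The main obstacle is not located in this concluding combination, which is essentially bookkeeping, but in the two inputs it consumes. The delicate point is that $ORDER$ must be \emph{one and the same} problem in both invoked theorems, so that the membership claim and the non-membership claim refer to the identical language; this is guaranteed only because $ORDER$ is fixed once in Definition \ref{def:ORDER}. The genuinely hard content—that the linearly ordered depend-path structure of $ORDER$ forces sequential computation of each $\left[P_{i}\right]P_{i}!$ from smaller indices and thereby defeats any parallel polylog-depth evaluation within $O(n)$ working space—has already been argued in $NC\not\ni ORDER$, so at this final stage I would only need to ensure that the cited inclusion $NC\subseteq P$ and the two witness lemmas are mutually consistent and invoked verbatim.
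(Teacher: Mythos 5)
Your proposal takes exactly the same route as the paper: it combines the two witness theorems $P\ni ORDER$ and $NC\not\ni ORDER$ with the unconditional inclusion $NC\subseteq P$ to upgrade the containment to a strict one, which is precisely the paper's one-line argument. The additional remarks on the circuit-simulation justification of $NC\subseteq P$ and on $ORDER$ being the identical language in both cited theorems are harmless elaborations, not a different method.
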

\begin{proof}
$P\ni ORDER$, $NC\not\ni ORDER$, and $P\supset NC$, thus we see
$P\supsetneq NC$.
\end{proof}

\section{$NC\supsetneq NL$}

Using the problem that's partial order structure, I show $NC\supsetneq NL$.
\begin{defn}
\label{def:LAYER}I will use the term {}``LAYER'' to the ORDER that
made the following element problems.

$P_{i}\in ClassNC$

$m>1,\, length=\left(\lg\left(n\right)\right)^{m},\, width=\dfrac{n}{length}$

$\left\{ P\right\} _{p}=\left\{ P_{q}\mid q\leqq width\times p\right\} $

$\left[P_{0}\right]=\left\{ P\right\} _{j\neq0},\left[P_{i\neq0}\right]=\left\{ P\right\} _{j<\left\lfloor \frac{i}{width}\right\rfloor }$
\end{defn}
I prove $NC\supsetneq NL$ by using CHAOS with $NC\ni LAYER$ and
$NL\not\ni LAYER$.
\begin{thm}
$NC\ni LAYER$\end{thm}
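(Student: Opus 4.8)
The plan is to exploit the layered dependency structure of LAYER to organize the computation into polylogarithmically many parallel stages. The key observation is that the depend path length of LAYER is bounded by the number of layers: since $\left[P_{i\neq0}\right]=\left\{P\right\}_{j<\left\lfloor \frac{i}{width}\right\rfloor}$, every element problem sitting in a given layer depends only on problems in strictly earlier layers, so the longest depend chain $L\left(\left[P_i\right]^n\rightsquigarrow P_i\right)$ is at most $length=\left(\lg\left(n\right)\right)^m$, which is polylogarithmic in $n$. This is the feature that distinguishes LAYER from ORDER, whose depend path length is linear.

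First I would process the layers sequentially, from layer $0$ upward. Within a single layer, every element problem $P_i$ has all of its blocking problems $\left[P_i\right]$ located in earlier layers, whose values have already been computed in previous stages. Consequently the problems within one layer carry no mutual dependence and can all be evaluated in parallel. Since each $P_i\in ClassNC$, given the already-computed values of its blocking problems, $P_i!$ can be computed in $O\left(\left(\lg n\right)^c\right)$ depth using polynomially many processors for some constant $c$. Handling all $width$ problems of a layer simultaneously multiplies only the number of processors (which stays polynomial), not the depth.

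Next I would bound the total parallel time by composing the per-layer computations. The depth of the whole computation is at most (number of layers) times (depth per layer), namely $\left(\lg n\right)^m\times O\left(\left(\lg n\right)^c\right)=O\left(\left(\lg n\right)^{m+c}\right)$, still polylogarithmic, while the total hardware remains polynomial because each layer uses polynomially many processors and there are only $length=\left(\lg n\right)^m$ many stages. This establishes that LAYER admits an NC computation, hence $NC\ni LAYER$.

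The hard part will be verifying that this sequential composition across layers genuinely preserves NC membership --- in effect, that NC is closed under chaining $\left(\lg n\right)^m$ many NC sub-computations along the depend path. One must confirm both that the intermediate results of each layer can be written to working memory and read by the next stage within the model's resource bounds, and that the split $width=\dfrac{n}{\left(\lg n\right)^m}$ keeps the per-stage work polynomial while $length=\left(\lg n\right)^m$ keeps the stage count polylogarithmic. The delicate point is matching the paper's depend-path formalism to the standard depth-and-size characterization of NC, so that the polylogarithmic depend path length translates exactly into polylogarithmic circuit depth; unlike the CHAOS and ORDER cases, here the parallel resources suffice precisely because the dependency DAG is shallow.
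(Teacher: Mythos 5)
Your proposal takes essentially the same approach as the paper: exploit the fact that each layer is an antichain of mutually independent problems to evaluate its $width$ problems in parallel, and chain the $length=\left(\lg n\right)^{m}$ layers sequentially, so that the depth stays polylogarithmic while the hardware stays polynomial. The only difference is that you explicitly account for the $O\left(\left(\lg n\right)^{c}\right)$ depth of computing each element problem $P_{i}\in ClassNC$, giving total depth $O\left(\left(\lg n\right)^{m+c}\right)$, a factor the paper's proof silently treats as constant --- a refinement of the same argument, not a different route.
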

\begin{proof}
LAYER is the problem that have $width=O\left(\frac{n}{\left(\lg\left(n\right)\right)^{m}}\right)$
size anti chain of variable problem, and have $length=O\left(\left(\lg\left(n\right)\right)^{m}\right)$
length rotate path. Each variable problem in anti chain is independent
each other and UTM can compute these problems in parallel. Therefore
UTM that have $O\left(\frac{n}{\left(\lg\left(n\right)\right)^{m}}\right)<O\left(n\right)$
TM can compute LAYER in $O\left(\left(\lg\left(n\right)\right)^{m}\right)$
time.

From the above, $NC\ni LAYER$.\end{proof}
\begin{thm}
$NL\not\ni LAYER$\end{thm}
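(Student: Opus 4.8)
The plan is to mirror the $AL\not\ni CHAOS$ and $NC\not\ni ORDER$ arguments, but now the obstruction is the \emph{width} of the layers rather than the length of a single chain. First I would read off from Definition~\ref{def:LAYER} that each layer is an antichain of $width=\frac{n}{\left(\lg\left(n\right)\right)^{m}}$ mutually independent variable problems, and that every $P_{i}$ in layer $\ell=\left\lfloor \frac{i}{width}\right\rfloor$ has blocking problem $\left\{ P\right\} _{j<\ell}$, so the whole of each layer feeds into every later layer. Hence the cut separating two consecutive layers must carry all $width$ results $\left[P_{i}\right]!$ of the lower layer at once.

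Next I would invoke Theorem~\ref{thm:Independency of VTM}: the VTMs computing one antichain must run in parallel and be recorded in different space, because the next layer needs each of their results simultaneously. To keep all $width$ results available across a layer boundary, an LNTM must either retain them (fixing the $P?$ values to break the cyclic transitive relation of Theorem~\ref{thm:CP as set}) or record the assumed combinations. Exactly as in $AL\not\ni CHAOS$, retaining them needs space as the antichain cardinality $width=\frac{n}{\left(\lg\left(n\right)\right)^{m}}>\lg\left(n\right)$, while recording the combinations needs space as the power set $2^{width}>\lg\left(n\right)$; either bound exceeds the $O\left(\lg\left(n\right)\right)$ working memory of an LNTM.

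I would then close by reductio, as before. Assume an LNTM computes LAYER; relating LAYER to the LNTM uses an LDTM, so the composition of LDTM and LNTM must reproduce the LAYER structure. By Theorem~\ref{thm:Structure of TM} that composition is well-formed, yet LAYER is not well-formed (Theorem~\ref{thm:Structure of CHAOS}), so the LNTM must remove the cyclic of the transitive relation and thereby incur one of the two space bounds above. Both exceed $\lg\left(n\right)$, giving the contradiction $NL\not\ni LAYER$. The contrast with $NC\ni LAYER$ is precisely that an NC computation may run all $width$ antichain VTMs in parallel across a layer boundary, whereas the single log-space work tape of an LNTM cannot hold their combined output.

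The step I expect to be the main obstacle is ruling out that nondeterminism lets the LNTM dodge the width bottleneck by guessing the lower layer's results and verifying them locally. The point is that checking the layered consistency of $P_{i}?\left\{ P_{i}\rightsquigarrow P_{i}\right\} !$ against the full blocking set $\left\{ P\right\} _{j<\ell}$ still requires every one of the $width$ guessed bits to be present simultaneously, and a log-space tape cannot hold a cut of width $\frac{n}{\left(\lg\left(n\right)\right)^{m}}$. Making this width bottleneck airtight against nondeterministic branching is where the argument needs the most care.
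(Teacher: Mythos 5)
Your proposal and the paper's proof part ways at exactly the step you flag as the main obstacle, and the paper resolves that step in a way your width-based argument cannot. The paper's proof does \emph{not} claim that all $width$ values of $\left[P_{i}\right]?$ must be present simultaneously. On the contrary, it concedes that the LNTM, unable to record all of $\left[P_{i}\right]?$ in $O\left(\lg\left(n\right)\right)$ space, will \emph{divide} $\left[P_{i}\right]?$ into pieces that fit in log space and process them separately; this guess-and-verify style of computation is essentially what the paper itself later exploits to prove $NL\ni TWINE$. So within the paper's own framework, the width of the blocking set is not by itself an obstruction for a nondeterministic machine, and your two space bounds ($width>\lg\left(n\right)$ to retain the values, $2^{width}>\lg\left(n\right)$ to retain combinations), transplanted from the $AL\not\ni CHAOS$ argument, only follow if division is impossible --- which you assert but never argue. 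Likewise, your appeal to the well-formedness and LDTM-composition machinery of Theorem \ref{thm:Structure of CHAOS} is not part of the paper's LAYER argument at all.

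The paper's actual contradiction is length-based, not width-based: because the LNTM (unlike an alternating machine) has no universal states, it must \emph{record} which combination of the divided pieces of $\left[P_{i}\right]?$ it is currently working under, since $P_{i}!$ changes with that combination; the same bookkeeping recurs for $\left[P_{i}\right]^{2}?,\left[P_{i}\right]^{3}?,\cdots$, and none of it can be discharged until the rotate path closes. This forces at least $length=O\left(\left(\lg\left(n\right)\right)^{m}\right)>O\left(\lg\left(n\right)\right)$ space --- this is where the hypothesis $m>1$ is actually used --- and that is the contradiction with the log-space bound. The missing idea in your proposal is precisely this: grant the division of the wide cut, then show that the combination bookkeeping accumulates along the rotate path of length $\left(\lg\left(n\right)\right)^{m}$. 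Without it, your argument reduces to the unproven assertion that the whole cut must be held at once, which is the very point nondeterministic guessing is designed to evade.
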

\begin{proof}
We assume that LNTM can compute the LAYER. But the assumption contradict
with LAYER and we can see $NL\not\ni LAYER$. 

In LAYER, LNTM must use $\left[P_{i}\right]?=\left\{ P\right\} _{j<\left\lfloor \frac{i}{width}\right\rfloor }?$
to compute $P_{i}!$. But LNTM can not record all $\left[P_{i}\right]?$
into $O\left(\lg\left(n\right)\right)$ space. Therefore, LNTM must
divide $\left[P_{i}\right]?$ to fit $O\left(\lg\left(n\right)\right)$
space.

But LNTM must need the information of divided $\left[P_{i}\right]?$
combination because $P_{i}!$ is changed by the $\left[P_{i}\right]?$
combination. LNTM can not use universal state, therefore LNTM must
record the information of each $\left[P_{i}\right]?$ combination.
And $\left[P_{i}\right]^{2},\left[P_{i}\right]^{3},\left[P_{i}\right]^{4},\cdots$
will also like $\left[P_{i}\right]$ and LNTM can not stop until round
rotate path. Therefore, LNTM must record at least $length=O\left(\left(\lg\left(n\right)\right)^{m}\right)$
space.

From the above, the assumption that LNTM can compute LAYER contradict
with LNTM's condition. Therefore, we can say from the reductio ad
absurdum that LNTM can not compute LAYER, and $NL\not\ni LAYER$.\end{proof}
\begin{thm}
$NC\supsetneq NL$\end{thm}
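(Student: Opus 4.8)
The plan is to establish $NC \supsetneq NL$ by combining the two facts proven immediately above—$NC \ni LAYER$ and $NL \not\ni LAYER$—with the known inclusion $NL \subseteq NC$, exactly mirroring the structure of the three earlier separation conclusions ($NP \supsetneq AL$, $P \supsetneq NC$) in the excerpt. First I would invoke Theorem stating $NC \ni LAYER$, which exhibits $LAYER$ as a concrete problem lying inside $NC$. Then I would invoke the theorem establishing $NL \not\ni LAYER$, which shows the same problem cannot be decided by any $LNTM$ operating in $O(\lg(n))$ space. Together these give a witness problem in $NC \setminus NL$.

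Second, I would recall the classical inclusion $NL \subseteq NC$ (indeed $NL \subseteq NC^{2}$ via the reachability / matrix-power argument), which the paper appears to treat as background, since the parallel proofs of $NP \supsetneq AL$ and $P \supsetneq NC$ each silently assume the corresponding containment $AL \subseteq NP$ and $NC \subseteq P$. With $NL \subseteq NC$ in hand, the separating witness $LAYER \in NC \setminus NL$ forces the inclusion to be strict: if $NC = NL$ held, then $LAYER \in NC$ would imply $LAYER \in NL$, contradicting $NL \not\ni LAYER$.

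The proof itself is therefore a one-line deduction of the form: \emph{from $NC \ni LAYER$, $NL \not\ni LAYER$, and $NC \supseteq NL$ we conclude $NC \supsetneq NL$.} No new construction is required; all the technical work has been discharged in the preceding two theorems. I would write it in the same terse template the author uses for the analogous conclusions, namely stating the three ingredients and reading off strict containment.

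The main obstacle is not in this final deduction but upstream, in the already-stated theorem $NL \not\ni LAYER$: its argument hinges on the claim that an $LNTM$ must record $\Omega((\lg(n))^{m})$ bits to track the combination of divided assumptions $[P_i]?$ along the rotate path, because it cannot exploit universal branching the way an alternating machine could. If that space lower bound genuinely holds then the present theorem follows trivially; the entire weight of the separation rests on whether the nondeterministic machine can somehow amortize or guess the dependency combinations within logarithmic space rather than storing them explicitly, and that is precisely the point an attentive reader should scrutinize before accepting the one-line conclusion here.
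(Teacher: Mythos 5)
Your proposal matches the paper's own proof exactly: it cites $NC \ni LAYER$, $NL \not\ni LAYER$, and the containment $NC \supset NL$, and reads off the strict inclusion in one line. Your closing caveat---that the real burden lies upstream in the $NL \not\ni LAYER$ argument---is a fair observation but does not change the fact that this final deduction is the same as the paper's.
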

\begin{proof}
$NC\ni LAYER$, $NL\not\ni LAYER$, and $NC\supset NL$, thus we see
$NC\supsetneq NL$.
\end{proof}

\section{$NL\supsetneq L$}

Using the problem that relation spread to whole, I show $NC\supsetneq NL$.
\begin{defn}
\label{def:TWINE}I will use the term {}``TWINE'' to the LAYER that
made the following element problems.

$P_{i}\in ClassNL$

$\left[P_{0}\right]\subset\left\{ P\right\} _{j\neq0},\left[P_{i\neq0}\right]\subset\left\{ P\right\} _{j<\left\lfloor \frac{i}{width}\right\rfloor },\left|\left[P_{i}\right]\right|=O\left(\lg\left(n\right)\right)$

$O\left(L\left(P_{0}\rightsquigarrow P_{0}\right)\right)>O\left(1\right)$
\end{defn}
I prove $NL\supsetneq L$ by using CHAOS with $NL\ni TWINE$ and $L\not\ni TWINE$.
\begin{thm}
$NL\ni TWINE$\end{thm}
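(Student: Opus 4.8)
The plan is to exhibit an LNTM that decides TWINE within $O(\lg n)$ working memory, which by the definitions above places TWINE in $NL$. Since TWINE is a LAYER it keeps the layered rotate-path skeleton, and the two features I would exploit are that every element problem has $P_i \in ClassNL$ and that every blocking set is sparse, $|[P_i]| = O(\lg n)$. This sparsity is precisely the quantity that swelled to $O((\lg n)^m)$ and defeated the machine in the proof of $NL \not\ni LAYER$, so cutting it down to $O(\lg n)$ is what I expect to buy membership.

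First I would show that evaluating one node is cheap. For a fixed $P_i$ the assumption $[P_i]?$ assigns truth values to at most $O(\lg n)$ blocking problems, so it is an $O(\lg n)$-bit string that the LNTM can guess nondeterministically and keep on its work tape. Given $[P_i]?$ the machine computes $[P_i]?P_i!$ by simulating the nondeterministic log-space procedure guaranteed by $P_i \in ClassNL$; by Theorem~\ref{thm:Emulation of TM by UTM} the action configuration of this sub-VTM is itself recordable in log space, so evaluating one node and testing $P_i! = P_i?$ costs only $O(\lg n)$ memory.

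Next I would thread these evaluations along the rotate path $P_0 \rightsquigarrow P_0$. The LNTM fixes an assumed top value $[P_0]^{length}?$ and then walks the chain $[P_0]^{length} \rightsquigarrow P_0$ one node at a time: at each node it guesses that node's $O(\lg n)$ blocking values, evaluates the node, checks local consistency, and advances to the path-successor. Because the walk is streaming, the only persistent data are the current node name, its $O(\lg n)$ guessed values, the running computed value, and a position counter, each of size $O(\lg n)$; nothing resembling the full path is ever stored. One accepting branch closes the cycle by verifying that the value delivered back to $P_0$ equals the initial $[P_0]^{length}?$, and TWINE holds iff some branch accepts.

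The step I expect to be the genuine obstacle is forcing the locally guessed blocking values to cohere into one globally consistent assignment without ever exceeding log space: a single blocking problem may be shared by many nodes, so independent guesses risk certifying an inconsistent $V$, while remembering every value ever fixed would overflow the tape. My intended resolution is to traverse strictly in the dependency order dictated by $[P_{i \neq 0}] \subset \{P\}_{j < \lfloor i/width \rfloor}$, so that each value a node needs has already been committed earlier on the current branch, and to rely on the sparsity $|[P_i]| = O(\lg n)$ together with $P_i \in ClassNL$ to keep each re-derivation inside $O(\lg n)$ space. This is the same device that lets $NL$ certify one reachability witness without storing the search tree, and it is exactly the room that the dense dependencies of LAYER deny; here the LNTM's nondeterministic branching stands in for the universal branching that $NL$, unlike the alternating machines behind $NC$, does not possess. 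Assembling these pieces yields an $O(\lg n)$-space nondeterministic decision procedure, so $NL \ni TWINE$.
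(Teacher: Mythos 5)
Your machine is, in outline, the paper's machine: a nondeterministic walk around the rotate path that guesses each node's blocking values locally, keeps only $O\left(\lg n\right)$ bits of bookkeeping, tries both $P_{0}?=1$ and $P_{0}?=0$, and accepts or rejects by closing the cycle back at $P_{0}$. The divergence --- and the gap --- is in the step you yourself flag as the genuine obstacle. Your resolution, namely to traverse in the dependency order of $\left[P_{i\neq0}\right]\subset\left\{ P\right\} _{j<\left\lfloor i/width\right\rfloor }$ so that every value a node needs ``has already been committed earlier on the current branch,'' recovering such values by ``re-derivation,'' does not go through in log space. A value committed earlier on a branch is available later only if it is stored, and your own space accounting (correctly) forbids storing more than one node's worth of guesses; so a blocking problem shared by several nodes must either be re-guessed, which reintroduces exactly the inconsistent-$V$ risk you are trying to eliminate, or re-derived. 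But re-derivation is recursive: computing $P_{j}!$ needs $\left[P_{j}\right]?$, which was itself guessed and discarded, so each re-derivation spawns further ones; the recursion depth is the rotate-path length $L\left(P_{0}\rightsquigarrow P_{0}\right)>O\left(1\right)$, with $O\left(\lg n\right)$ bits of guessed context per level, which is precisely the space blow-up on which the paper's own arguments for $NL\not\ni LAYER$ and $L\not\ni TWINE$ are built. Worse, TWINE has no strict dependency order to traverse at all: $\left[P_{0}\right]\subset\left\{ P\right\} _{j\neq0}$ makes the structure cyclic, so the recursion has no base case except by going around the cycle again.

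The paper resolves the crux in the opposite way: it never attempts to make the guesses globally consistent. Its LNTM keeps only the single current guess, and the proof argues by a case analysis --- (a) a guessed value that is never a possible value of $P_{i}$, (b) all occurrences of $P_{i}$ along a depend path guessed alike, (c) occurrences guessed differently --- that branches of types (a) and (c) are self-defeating: they correspond to no real depend path, or they force $P_{0}?\neq P_{0}!$ under both $P_{0}?=1$ and $P_{0}?=0$ when the cycle closes, so acceptance is claimed to be governed by the consistent branches alone. Whatever one thinks of the rigor of that case analysis, it is a correctness claim about \emph{unconstrained} guessing, not a mechanism for \emph{constraining} the guesses; your write-up needs the latter and does not supply a working one. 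To repair it along the paper's lines, you would drop the dependency-order/re-derivation device and instead argue directly that an inconsistently guessed branch can never cause a wrong accept.
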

\begin{proof}
LNTM can compute TWINE following procedure.

First, LNTM choose $\left[P_{0}\right]$ by nondeterministic that
satisfies $\left[P_{0}\right]P_{0}?=1$ . If $\left[P_{0}\right]!$
is not exist, LNTM choose $\left[P_{0}\right]P_{0}?=0$ by nondeterministic.
If $\left[P_{0}\right]!$ is not also exist, LNTM accept input. If
$\left[P_{0}\right]!$ is exist, LNTM choose $P_{i}\in\left[P_{0}\right]$
and choose $\left[P_{i}\right]!$ by nondeterministic that satisfies
previous $\left[P_{0}\right]!$ condition. If $\left[P_{i}\right]!$
is not exist, LNTM choose $\left[P_{0}\right]P_{0}?=0$ by nondeterministic.
If $\left[P_{i}\right]!$ is not also exist, LNTM accept input. If
$\left[P_{i}\right]!$ is exist, LNTM repeat same procedure to $P_{0}$.
If LNTM reach to $P_{0}$, LNTM check $P_{0}?=P_{0}!$. If $P_{0}?=P_{0}!$
then LNTM accept input, $P_{0}?\neq P_{0}!$ in case $P_{0}?=1$ and
$P_{0}?=0$, LNTM reject input.

Such procedure, LNTM can verify all possible combinations of $P_{i}!$.
Because LNTM can verify whether all blocking problem of $P_{0}?$.
The case of $P_{i}$ is three case, a) $P_{i}!$ is the value that
never possible value of $P_{i}$, b) all $P_{i}!$ of any depend path
is same value, c) some $P_{i}!$ of depend path is different values
each other. In case a), the depend path is never exist and LNTM can
accept the branch. In case b), the depend path is correct constraint
and LNTM can continue computing. In case c), the same depend path
take true and false because the different $P_{i}!$ leads different
$\left[P_{0}\right]!$, and rotate paht will contradict at $P_{0}!$
or never possible value that refer a). Therefore LNTM can compute
correctly in a)b)c).

And this procedure use $O\left(\log\left(n\right)\right)$ space because
LNTM use one $P_{i}!$ nondeterministic and compare $P_{0}?=P_{0}!$.
From the above, $NL\ni TWINE$.
\end{proof}
I prove following lemma, and $L\not\ni TWINE$.
\begin{thm}
\label{thm:Symmetry of Combined Problem}If Combined Problem is true,
all rotate path is symmetric about satisfiability. In other words,
Decision of the Combined Problem is true, include the decision of
these rotate path is symmetric about satisfiability.\end{thm}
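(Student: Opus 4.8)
The plan is to argue entirely inside the set-theoretic picture set up in Theorem \ref{thm:CP as set} and Theorem \ref{thm:Structure of CHAOS}, where the combined problem is true exactly when $\bigcap CP\neq\emptyset$, that is, when there is a single truth value assignment $V^{*}$ lying in every element problem $P_{i}$ regarded as the set of assignments at which $P_{i}? = P_{i}!$. So the hypothesis ``$CP$ is true'' is read as: there exists $V^{*}$ that is simultaneously consistent at every node of the depend structure, meaning the assumed value agrees with the value computed from the blocking problems at each $P_{i}$.

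First I would fix an arbitrary rotate path $P_{i}\rightsquigarrow P_{i}$, i.e.\ a cycle $[P_{i}]^{n}\rightarrow[P_{i}]^{n-1}\rightarrow\cdots\rightarrow P_{i}$ whose endpoint $[P_{i}]^{n}$ coincides with $P_{i}$. Because $V^{*}$ is consistent at each element problem, restricting $V^{*}$ to the problems on this cycle makes every single depend relation carry its assumed value forward to the value its successor computes. Composing these steps all the way around the cycle then defines a single boolean function $f\colon\{0,1\}\rightarrow\{0,1\}$ of the value entering at $P_{i}$, and the consistency $P_{i}?=P_{i}!$ under $V^{*}$ is precisely the statement that the entering value is a fixed point of $f$.

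Next I would run the case analysis on $f$, which on one bit can only be the identity, the negation, or a constant. If $f$ were the negation it would have no fixed point, contradicting that $V^{*}$ closes the cycle up at $P_{i}$; this is exactly the $P_{0}!$ contradiction isolated in case c) of the $NL\ni TWINE$ argument, where different $[P_{0}]!$ force the rotate path to take both true and false. If $f$ were constant it would fix a value of $P_{i}$ independently of its blocking problems, but by the simplification adopted after Theorem \ref{thm:CP as set} — that $CP$ is efficient and non-redundant and that each $P$ is in conflict at a unique $V$ — every link on a rotate path is value-transmitting rather than value-fixing, so no cycle can collapse to a constant. Hence $f$ is the identity, both $\top$ and $\bot$ are fixed points, and the rotate path is symmetric about satisfiability; since the cycle was arbitrary, the decision that $CP$ is true contains the symmetric pair of consistent traversals for every rotate path.

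The step I expect to be the main obstacle is ruling out the constant case rigorously. The whole symmetry hinges on showing that within the combined-problem construction no depend relation can pin a value on a cycle, and the only leverage for this is the informal ``efficient, non-redundant, unique-conflict'' stipulation. Turning that stipulation into a clean statement that each link is value-transmitting — and thus that the around-the-cycle composition is identity rather than constant — is where the genuine work lies, and it is also the bridge that must be made explicit to later feed the $L\not\ni TWINE$ contradiction, since it is exactly this forced $\top/\bot$ symmetry that a deterministic log-space machine cannot resolve by a single forward computation.
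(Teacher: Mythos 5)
There is a genuine gap, and it sits exactly where you predicted: the constant case cannot be ruled out, and in fact cannot be ruled out from anything the paper provides. Your fixed-point analysis is sound as far as it goes: if $CP$ is true, the around-the-cycle map $f\colon\{0,1\}\rightarrow\{0,1\}$ (with off-cycle values frozen at the witness $V^{*}$) must have a fixed point, which kills the negation case. But nothing forces $f$ to be the identity rather than constant. The stipulation after Theorem \ref{thm:CP as set} --- that $CP$ is ``efficient,'' non-redundant, and that every $P$ has a $V$ at which only $P$ conflicts --- does not make links value-transmitting: a problem $P_{i}$ whose computed value is constantly $\top$ irrespective of $\left[P_{i}\right]?$ still has an assignment at which only $P_{i}$ conflicts (any $V$ with $P_{i}?=\bot$ and everything else consistent), so value-fixing links are fully admissible under the paper's conditions. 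For such a $CP$ the combined problem is true, yet the rotate path closes up only at $\top$; under your reading of ``symmetric about satisfiability'' the theorem is then simply false, so the bridge you hoped to build is not missing work but an impossibility. A secondary weakness: even in the identity case, flipping the on-cycle values gives a consistent traversal of the cycle but not automatically a globally consistent assignment, since off-cycle problems may depend on on-cycle ones; your final sentence of the second paragraph glosses over this.

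It is also worth knowing how far your attempt diverges from what the paper actually does. The paper's entire proof is one sentence: ``If Combined Problem is true, all rotate path is satisfied and symmetric about satisfiability,'' i.e.\ it treats ``satisfied under the witnessing assignment'' as a synonym for ``symmetric about satisfiability'' and concludes by restating the hypothesis. The term ``symmetric about satisfiability'' is never defined anywhere in the paper, so the paper's proof is circular rather than incomplete, and it never needs (or attempts) any analysis of the cycle map. You chose the only substantive interpretation under which the later $L\not\ni TWINE$ argument could use the theorem, and you correctly isolated the step on which it hinges; the right conclusion from your own diagnosis, however, is that the substantive version of the statement does not follow from the paper's framework, not that a cleaner formulation of the ``efficient, non-redundant'' stipulation would rescue it.
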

\begin{proof}
If Combined Problem is true, all rotate path is satisfied and symmetric
about satisfiability. Therefore, it is possible to determine whether
these rotate path is symmetric about satisfiability by determine the
true that Combined Problem.\end{proof}
\begin{thm}
\label{thm:Asymmetry of Rotate Path}The rotate path of Combined Problem
is not necessarily symmetric about satisfiability.\end{thm}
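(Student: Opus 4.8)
The plan is to argue by exhibiting an explicit counterexample, in contrast to the conditional statement of Theorem \ref{thm:Symmetry of Combined Problem}. That theorem asserts symmetry only under the hypothesis that the Combined Problem is true; the present claim is the complementary observation that symmetry is a genuine constraint and can fail, so no rotate path may be assumed symmetric a priori. Accordingly, I would construct a single Combined Problem whose rotate path $P_0 \rightsquigarrow P_0$ is not symmetric about satisfiability, which is enough to justify the word \emph{necessarily} in the statement.

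First I would fix the element problems along a single rotate path, taking $[P_{i+1}] = P_i$ for $0 \le i < k$ with the indices closing the cycle back at $P_0$, which is exactly the Rotate Path configuration of Definition \ref{def:Depend Path}. Following the Parity example, I would equip the depend relations with propagation rules of which all but one simply copy the assumed value, $P_{i+1}! = [P_{i+1}]? = P_i?$, while one distinguished relation is forcing, evaluating to $\top$ on either blocking value. Propagating the copy rules around the cycle then transports the forced $\top$ all the way to $P_0$, so that $P_0?\{P_0 \rightsquigarrow P_0\}! = \top$ regardless of the assumed $P_0?$. Consequently the assumption $P_0? = \top$ closes consistently, whereas $P_0? = \bot$ yields $\top \neq \bot$; the true and false assumptions therefore have different satisfiability status, and this rotate path is not symmetric about satisfiability.

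Next I would verify that the instance is a legitimate Combined Problem in the sense of Definition \ref{def:Combined Problem}: each $P_i$ is a bona fide variable problem blocked by $[P_i]$, its transitive closure is the rotate path $P_0 \rightsquigarrow P_0$ treated as a set in Theorem \ref{thm:CP as set}, and the forcing relation introduces no minimal element, so the cyclic structure is retained. Combining this with Theorem \ref{thm:Symmetry of Combined Problem} then produces the intended contrast: symmetry holds precisely when the Combined Problem is true, yet some rotate paths are asymmetric, so deciding the Combined Problem cannot be reduced to a vacuous symmetry test but must actually resolve the assignment around every cycle.

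The hard part will be pinning down \emph{symmetric about satisfiability} sharply enough for the counterexample to bite. The phrase is used in Theorem \ref{thm:Symmetry of Combined Problem} only in the affirmative direction, so I must commit to the reading that symmetry means the true-assumption and false-assumption branches of a rotate path share the same satisfiability status, and then confirm that the forcing rule genuinely separates those two statuses while still counting as a valid depend relation. The secondary difficulty is ensuring that the forcing rule does not covertly create a minimal element that would make the structure well-founded and thereby restore symmetry; keeping the cycle intact is exactly what guarantees the asymmetry survives.
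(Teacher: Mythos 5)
Your proposal is correct and takes essentially the same route as the paper: the paper's own proof is a one-line assertion that it is ``easy to see'' one can create a rotate path with true and false results at the same problem, which is precisely the kind of asymmetric counterexample you construct. Your version simply makes explicit (via the copy-relations-plus-one-forcing-relation cycle) what the paper leaves as an unelaborated claim, so it is, if anything, a more careful instantiation of the same argument.
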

\begin{proof}
As you can see easily that is possible to create rotate path with
true and false result at same problem. Therefore, it is possible to
create rotate path that is asymmetry each other, and the rotate path
of Combined Problem is not necessarily symmetric about satisfiability.\end{proof}
\begin{thm}
\label{thm:Limit of LDTM}LDTM can handled elements atmost $O\left(n\right)$.
Therefore, LDTM can check elements symmetry or asymmetry atmost $O\left(n\right)$.\end{thm}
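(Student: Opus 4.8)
The plan is to bound the number of element problems that an LDTM can handle by combining the input-size restriction of Definition~\ref{def:Limitations of the TM} with the space accounting of Theorem~\ref{thm:Emulation of TM by UTM}. First I would recall that an LDTM reads its input $w$ of length $O(n)$ from a special read-only tape and has only $O(\lg(n))$ writable working memory. Since every element problem $P_i$ of a combined problem must appear somewhere in the instance $w$, and each element needs at least constant space to be encoded, the total number $k$ of distinct elements the machine can ever see satisfies $k = O(n)$.

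Next I would argue that the LDTM cannot manufacture or address more elements than those already present in $w$. By Theorem~\ref{thm:Emulation of TM by UTM}, recording a single action configuration already costs $\Theta(\lg(n))$ space, so the $O(\lg(n))$ working tape holds only $O(1)$ freshly built configurations at any instant; hence any enumeration of elements must be driven by moving the input head, whose position is itself stored in $O(\lg(n))$ space. Because the computation history is acyclic (Definition~\ref{def:Limitations of the TM}), the head sweeps each input position a bounded number of times, so the collection of elements the machine actually handles coincides, up to a constant factor, with the $O(n)$ elements encoded in $w$.

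To obtain the second sentence I would observe that deciding whether a family of elements is symmetric or asymmetric about satisfiability, in the sense of Theorems~\ref{thm:Symmetry of Combined Problem} and \ref{thm:Asymmetry of Rotate Path}, forces the machine to handle each element that participates in a comparison. Since the previous steps cap the number of handled elements at $O(n)$, the number of elements whose symmetry or asymmetry the LDTM can check is also at most $O(n)$.

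The step I expect to be the main obstacle is the second one: ruling out that the $O(\lg(n))$ working memory is re-used across many passes to index a super-linear family of ``virtual'' elements. The resolution should lean on Theorem~\ref{thm:Independency of VTM}, which forbids overwriting the moving configuration of an element that must be retained for a later comparison; together with acyclicity this prevents the LDTM from amortizing its sublinear space into more than $O(n)$ genuinely distinct handled elements.
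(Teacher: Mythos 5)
Your bound comes from the wrong resource, and that mismatch is a genuine gap. You derive the $O(n)$ cap from the \emph{length of the input}: every element problem must be written somewhere in $w$, each costs at least constant space, hence at most $O(n)$ elements. The paper's own proof instead derives it from the \emph{working memory}: to tell elements apart the LDTM must name them by pointers, and $O(\lg(n))$ writable space admits only $O(n)$ distinct pointer values, so at most $O(n)$ objects can ever be told apart (and symmetry checking presupposes telling the compared objects apart). The distinction is not cosmetic. This theorem is later applied, in the proof of $L\not\ni TWINE$, to \emph{rotate paths}, of which there are $O\left(n^{L\left(P_{0}\rightsquigarrow P_{0}\right)}\right)$ by Theorem \ref{thm:Size of TWINE's Rotate Path}; rotate paths are implicit combinations of values, not items written in the input, so an input-length argument says nothing about how many of them the machine can distinguish. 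Only the pointer-capacity argument transfers to that setting, and it is exactly what the later proof invokes. Even granting every step you wrote, you would have proved a statement too weak for its intended use.

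Separately, one of your internal steps is false: you infer from acyclicity of the computation history (Definition \ref{def:Limitations of the TM}) that ``the head sweeps each input position a bounded number of times.'' Acyclicity only means no configuration repeats (the step counter is part of the configuration); a log-space machine may still run polynomially many steps and revisit every input cell polynomially often. Consequently your conclusion that the handled elements coincide, up to a constant factor, with those encoded in $w$ does not follow, and the appeal to Theorem \ref{thm:Independency of VTM} does not repair it, since that theorem concerns parallel VTMs that must share results, not bounds on input-head motion.
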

\begin{proof}
In order to tell apart each element, LDTM need the information. LDTM
can tell apart each element by using the pointer. But LDTM can use
atmost $O\left(\lg\left(n\right)\right)$ space, LDTM can tell apart
atmost $O\left(n\right)$ elements. Therefore, LDTM can handled elements
atmost $O\left(n\right)$.

And to check the symmetry of two elements, it's necesary to tell apart
these elements. Therefore, LDTM can check elements symmetry or asymmetry
atmost $O\left(n\right)$.\end{proof}
\begin{thm}
\label{thm:Symmetry of TM}When dealing with a Combined Problem, NTM
can deal with the symmetry of the elements in same step. But DTM can
not deal with the symmetry of the elements in same step.\end{thm}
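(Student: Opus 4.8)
The plan is to analyze what it means for a Turing machine to "deal with the symmetry of the elements in same step" in the sense of the preceding lemmas, and then to separate the nondeterministic and deterministic cases by appealing to the structural results already established. First I would fix the notion being compared: by Theorem~\ref{thm:Symmetry of Combined Problem} and Theorem~\ref{thm:Asymmetry of Rotate Path}, a Combined Problem is true exactly when every rotate path is symmetric about satisfiability, yet individual rotate paths need not be symmetric; so "dealing with the symmetry" must mean certifying, within the computation, that the collection of rotate paths is consistent (all symmetric) or detecting a witnessing asymmetry. The claim then splits into a positive part for NTM and a negative part for DTM, and I would prove each separately.

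For the NTM half I would argue that a single nondeterministic branch can guess one coherent assignment across all the relevant elements and then, in one verification pass, read off whether the guessed rotate path is symmetric, so the symmetry check collapses into a single step of the branch's configuration rather than being spread across a sequence of deterministic comparisons. The key supporting fact is Theorem~\ref{thm:Structure of TM}: an NTM's computation history corresponds to the \emph{set} of its target configurations, so the branching structure itself carries the comparison, and the ``same step'' phrasing reflects that each branch commits to a symmetric-or-asymmetric verdict simultaneously rather than iterating. I would lean on Theorem~\ref{thm:Sharing of the information of VTM} to justify that the branches do not interfere, so the verdict read off one branch is well defined.

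For the DTM half I would invoke Theorem~\ref{thm:Limit of LDTM}: an LDTM can tell apart at most $O\left(n\right)$ elements because it has only $O\left(\lg\left(n\right)\right)$ pointer space, and to check the symmetry of two elements it must first distinguish them. Hence a deterministic machine cannot compare all the elements at once; it is forced to process them sequentially, re-using its bounded working memory, so the symmetry determination necessarily spans many steps rather than one. I would phrase this as: because distinguishing is prerequisite to comparing, and distinguishing is pointer-bounded, the deterministic machine cannot realize the ``same step'' simultaneity that the nondeterministic branch achieves.

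The main obstacle I expect is pinning down a precise, non-circular meaning of ``in same step,'' since the informal statement risks conflating ``one branch's verdict'' with ``one transition of the machine.'' The delicate point is to show that the NTM genuinely gains simultaneity from nondeterministic branching (via the set-structure of Theorem~\ref{thm:Structure of TM}) in a way the DTM provably cannot, rather than merely trading time for the branching. I would therefore spend the most care on formalizing ``symmetry check in one step'' as a property of the configuration that commits to a verdict, and on making the pointer-counting argument of Theorem~\ref{thm:Limit of LDTM} do the real separating work, so that the deterministic impossibility is a genuine space lower bound and not just an artifact of the chosen phrasing.
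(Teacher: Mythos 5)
Your NTM half is essentially the paper's own argument: the paper also reasons that the NTM's computation history is a DAG rooted at the starting configuration, so that branches sharing the same trunk are symmetric and ``can replace each other,'' which is what your appeal to Theorem~\ref{thm:Structure of TM} amounts to. The divergence --- and the gap --- is in your DTM half. The paper's argument there is purely structural and makes no reference to space at all: a DTM has at most one computation history, a single path from starting configuration to halting configuration, so at each step it occupies exactly one configuration and no configuration can ``replace another''; that is the entire content of the claimed impossibility. Your proof instead imports Theorem~\ref{thm:Limit of LDTM}, a pointer-counting space bound that applies only to \emph{logarithmic-space} deterministic machines. But the statement being proved is about DTM in general: a deterministic machine with polynomial or unbounded work space can distinguish far more than $O\left(n\right)$ elements, yet the theorem still asserts that it cannot handle symmetry ``in same step.'' Your argument says nothing about such machines, so it establishes a strictly weaker statement than the one asked for.

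There is a second, related problem: even restricted to LDTM, a capacity bound does not yield a simultaneity bound. ``Can distinguish at most $O\left(n\right)$ elements'' is a statement about how many things fit in memory; ``cannot deal with symmetry in the same step'' is, in the paper's intended reading, a statement about determinism itself --- one configuration per time step, no branching, no replacement. Conflating the two also creates a redundancy downstream: the paper's proof of $L\not\ni TWINE$ uses Theorem~\ref{thm:Limit of LDTM} and the present theorem as two \emph{separate} ingredients (the space bound limits how many rotate paths one LDTM can check; the present theorem forces those checks to be carried out by separate, sequential machines). If the present theorem were merely a restatement of the space bound, that later proof would lose one of its two pillars. You should drop the appeal to Theorem~\ref{thm:Limit of LDTM} here and argue, as the paper does, directly from the linear path structure of deterministic computation histories versus the DAG structure of nondeterministic ones.
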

\begin{proof}
When computing a Combined Problem, DTM have at most one computation
history that is one way from starting configuration to halting configuration.
DTM's computation configuration can not replace another. And DTM can
not deal some elements symmetry at each step.

But NTM have branching computation history that is Directed Acyclic
Graph which root is starting configuration. Therefore, some branches
that have same trunk is symmetry and can replace each other. And NTM
can deal some element symmetry by dealing these element as branches.\end{proof}
\begin{thm}
\label{thm:Size of TWINE's Rotate Path}In TWINE, number of different
sequences of values in a rotate path is $O\left(n^{L\left(P_{0}\rightsquigarrow P_{0}\right)}\right)>O\left(n^{c}\right)$.\end{thm}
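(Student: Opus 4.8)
The plan is to count the distinct value sequences by multiplying the local branching contributed at each step of the rotate path, and then to observe that the resulting quantity outgrows every fixed polynomial. First I would recall from Definition \ref{def:TWINE} that in TWINE every blocking set satisfies $\left|\left[P_{i}\right]\right|=O\left(\lg\left(n\right)\right)$, so that the value of each problem $P_{i}$ encountered along a rotate path is pinned down only once we commit to a truth assignment of the $O\left(\lg\left(n\right)\right)$ problems in $\left[P_{i}\right]$. The number of such assignments is $2^{O\left(\lg\left(n\right)\right)}=O\left(n\right)$, and this is the branching factor attached to a single depend relation of the path.

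Next I would assemble the global count. By Definition \ref{def:Depend Path} a rotate path $P_{0}\rightsquigarrow P_{0}$ is a cyclic chain of depend relations whose length is $L\left(P_{0}\rightsquigarrow P_{0}\right)$. Traversing the chain, each of its $L\left(P_{0}\rightsquigarrow P_{0}\right)$ steps independently selects one of the $O\left(n\right)$ local value configurations described above, and these selections propagate the assumed values around the cycle. Hence the number of different sequences of values realized along the rotate path is the product of the per-step branching factors, namely $O\left(n\right)^{L\left(P_{0}\rightsquigarrow P_{0}\right)}=O\left(n^{L\left(P_{0}\rightsquigarrow P_{0}\right)}\right)$.

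Finally I would compare this with a fixed polynomial bound. TWINE is built so that $O\left(L\left(P_{0}\rightsquigarrow P_{0}\right)\right)>O\left(1\right)$, i.e.\ the depend path length is superconstant and grows without bound in $n$. Consequently $L\left(P_{0}\rightsquigarrow P_{0}\right)>c$ for every constant $c$ once $n$ is large, so $n^{L\left(P_{0}\rightsquigarrow P_{0}\right)}>n^{c}$, which yields the asserted $O\left(n^{L\left(P_{0}\rightsquigarrow P_{0}\right)}\right)>O\left(n^{c}\right)$.

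The step I expect to be the main obstacle is the independence claim in the middle paragraph: one must verify that the per-step value configurations genuinely combine into distinct sequences rather than collapsing, so that $O\left(n\right)^{L\left(P_{0}\rightsquigarrow P_{0}\right)}$ is an honest count and not a crude overestimate. This is where I would invoke the asymmetry of rotate paths from Theorem \ref{thm:Asymmetry of Rotate Path}, since it guarantees that differing choices of $\left[P_{i}\right]$ values can force differing values of $P_{i}$ and hence distinct downstream continuations, preventing the branches from merging. That distinctness is exactly what later allows the count to exceed the $O\left(n\right)$ element budget of an LDTM established in Theorem \ref{thm:Limit of LDTM}.
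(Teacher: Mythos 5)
Your proposal follows essentially the same route as the paper: each blocking set $\left[P_{i}\right]$ of size $O\left(\lg\left(n\right)\right)$ contributes a per-step branching factor of $2^{O\left(\lg\left(n\right)\right)}=O\left(n\right)$, this factor is multiplied over the $L\left(P_{0}\rightsquigarrow P_{0}\right)$ steps of the rotate path, and the superconstant path length then beats any fixed $n^{c}$. Your closing concern about branches collapsing, which you address via Theorem \ref{thm:Asymmetry of Rotate Path}, is a refinement the paper's own proof does not attempt, but it does not alter the approach.
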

\begin{proof}
In TWINE, number of different sequences of values $\left[P_{i}\right]$
is atmost $O\left(n\right)$, because $\left|\left[P_{i}\right]\right|=\lg\left(n\right)$.
And because of TWINE's structure, length of rotate path is atmost
$L\left(P_{0}\rightsquigarrow P_{0}\right)>O\left(1\right)$. Therefore,
number of different sequences of values in a rotate path is $O\left(\overset{L\left(P_{0}\rightsquigarrow P_{0}\right)}{\prod}\left[P_{i}\right]\right)=O\left(n^{L\left(P_{0}\rightsquigarrow P_{0}\right)}\right)>O\left(n^{c}\right)$.\end{proof}
\begin{thm}
$L\not\ni TWINE$\end{thm}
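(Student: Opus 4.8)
The plan is to argue by contradiction, mirroring the structure of the earlier separation proofs but now pitting the logarithmic space bound of LDTM against the combinatorial size of TWINE's rotate paths. First I would assume that some LDTM decides TWINE. By Theorem \ref{thm:Symmetry of Combined Problem}, deciding that a combined problem is true entails deciding that every rotate path is symmetric about satisfiability, so the assumed LDTM must in particular settle the symmetry of each rotate path of TWINE. By Theorem \ref{thm:Asymmetry of Rotate Path} this symmetry is not automatic: rotate paths can genuinely be asymmetric, so the LDTM cannot shortcut the decision and must actually verify the symmetry rather than assume it.

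Next I would bring in the size estimate. By Theorem \ref{thm:Size of TWINE's Rotate Path}, the number of distinct value-sequences along a single rotate path of TWINE is $O\left(n^{L\left(P_{0}\rightsquigarrow P_{0}\right)}\right)>O\left(n^{c}\right)$ for every constant $c$, since the depend-path length $L\left(P_{0}\rightsquigarrow P_{0}\right)$ grows without bound. To check the symmetry demanded by Theorem \ref{thm:Symmetry of Combined Problem}, the LDTM must be able to tell these sequences apart. But Theorem \ref{thm:Limit of LDTM} bounds by $O\left(n\right)$ the number of elements an LDTM can distinguish within its $O\left(\lg\left(n\right)\right)$ space, which is superpolynomially smaller than the number of sequences it would need to separate. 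Hence the LDTM cannot retain the rotate-path information required to certify symmetry.

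Finally I would make the contradiction concrete by contrasting with the NL upper bound. Theorem \ref{thm:Symmetry of TM} records that an NTM can absorb the symmetry of elements into its nondeterministic branching within a single step, whereas a DTM, having a single linear computation history, cannot replace one element by a symmetric one without explicitly recording both. Thus the LDTM is forced into the element-distinguishing regime bounded by Theorem \ref{thm:Limit of LDTM}, and the superpolynomial count from Theorem \ref{thm:Size of TWINE's Rotate Path} overwhelms it. This contradiction with the assumed LDTM completes the reductio and yields $L\not\ni TWINE$.

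The step I expect to be the main obstacle is the middle one: rigorously justifying that verifying symmetry really does require distinguishing all $O\left(n^{L\left(P_{0}\rightsquigarrow P_{0}\right)}\right)$ sequences, rather than some cleverly compressed invariant an LDTM could maintain in logarithmic space. The earlier theorems assert the sequence count and the distinguishing limit separately, but the crux is to show that no logarithmic-space summary of the rotate-path data suffices to certify symmetry -- that the symmetry check is genuinely sensitive to the full sequence information and cannot be folded into $O\left(\lg\left(n\right)\right)$ bits. Pinning this down, presumably by exhibiting the asymmetry witnessed in Theorem \ref{thm:Asymmetry of Rotate Path} as an obstruction to any such compression, is where the real work lies.
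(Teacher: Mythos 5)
Your proposal assembles the same ingredients as the paper's proof --- the reductio, Theorems \ref{thm:Symmetry of Combined Problem}, \ref{thm:Asymmetry of Rotate Path}, \ref{thm:Size of TWINE's Rotate Path}, \ref{thm:Limit of LDTM}, and \ref{thm:Symmetry of TM} --- but it stops exactly where the paper's proof does its decisive work, and the gap you yourself flag in your closing paragraph is precisely the part left unfilled. From ``$O\left(n^{L\left(P_{0}\rightsquigarrow P_{0}\right)}\right)$ value-sequences versus $O\left(n\right)$ distinguishable elements'' you conclude directly that the LDTM ``cannot retain the rotate-path information required to certify symmetry.'' That inference is not available as stated: nothing in Theorems \ref{thm:Size of TWINE's Rotate Path} and \ref{thm:Limit of LDTM} alone rules out an LDTM that enumerates rotate paths in batches of $O\left(n\right)$, reusing its space, or that maintains some compressed log-space summary --- which is exactly the objection you raise at the end and then leave open.

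The paper closes this hole (to whatever extent it does) with an argument your proposal never makes. It concedes that the LDTM would split the $O\left(n^{L\left(P_{0}\rightsquigarrow P_{0}\right)}\right)$ rotate paths into packs of size $O\left(n\right)$, yielding $O\left(n^{L\left(P_{0}\rightsquigarrow P_{0}\right)-1}\right)$ packs; observes that these packs again cannot be told apart, so the splitting must be repeated $O\left(L\left(P_{0}\rightsquigarrow P_{0}\right)\right)$ times; and then introduces Caller and Callee LDTMs, where the Callee needs the pack description from the Caller and the Caller needs the result back from the Callee. By Theorem \ref{thm:Independency of VTM}, machines that share information this way must execute in parallel in disjoint space, producing a chain of $O\left(L\left(P_{0}\rightsquigarrow P_{0}\right)\right)>O\left(1\right)$ mutually dependent machines, each demanding separate space --- and a constant number of log-space machines cannot supply a non-constant chain. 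That is the paper's contradiction. Whatever one thinks of the rigor of that step, it is the load-bearing component: without it, or a substitute for it, your argument shows only that a single monolithic pass over all rotate paths is impossible, not that $L\not\ni TWINE$.
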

\begin{proof}
We assume that LDTM can compute the TWINE. But the assumption contradict
with CHAOS and we can see $L\not\ni TWINE$. 

First, We think that compute rotate path. Proof. As mentioned above\ref{thm:Symmetry of Combined Problem},
all rotate path symmetry in satisfiability if TWINE is true. Thus
computing that TWINE is true include that all rotate path is symmetry.
And as mentioned above\ref{thm:Asymmetry of Rotate Path}, the rotate
path of TWINE is not necessarily symmetric about satisfiability, LDTM
must compute to compare their satisfiability. And as mentioned above\ref{thm:Symmetry of TM},
DTM can not deal some symmetry, DTM must deal these rotate path separately.

As mentioned above\ref{thm:Size of TWINE's Rotate Path}, number of
rotate path is $O\left(n^{L\left(P_{0}\rightsquigarrow P_{0}\right)}\right)>O\left(n^{c}\right)$.
As mentioned above\ref{thm:Limit of LDTM}, LDTM can check rotate
path symmetry or asymmetry atmost $O\left(n\right)$, and can not
check all rotate path. Therefore, LDTM must use multiple LDTM to check
all rotate path symmetry.

For checking the symmetry of rotate path, LDTM must tell apart each
rotate path. LDTM can handle each element atmost $O\left(n\right)$.
Therefore, LDTM must split all rotate path to fit $O\left(n\right)$.
The number of the rotate path pack are $O\left(\dfrac{n^{L\left(P_{0}\rightsquigarrow P_{0}\right)}}{n}\right)=O\left(n^{L\left(P_{0}\rightsquigarrow P_{0}\right)-1}\right)$.
LDTM can check symmetry all rotate path to check these pack. But LDTM
can not tell apart each rotate path pack, LDTM must repeat thus splitting
$O\left(L\left(P_{0}\rightsquigarrow P_{0}\right)\right)$ times.

We think the number of required LDTM to split rotate path. LDTM must
split rotate path and execute sub LDTM to check symmetry, and finally
check each sub LDTM's result and each symmetry. I will use the term
{}``Caller LDTM'' to the LDTM that split rotate path and execute
sub LDTM, and {}``Callee LDTM'' to the LDTM that called by Caller
LDTM. Callee LDTM must get the rotate path pack information to check
the symmetry from Caller LDTM. Caller LDTM must get the result information
from Callee LDTM. Therefore, as mentioned above\ref{thm:Independency of VTM},
Caller LDTM and Callee LDTM must execute in parallel and must use
different space.

Thus chain from Caller LDTM to Callee LDTM exist $O\left(L\left(P_{0}\rightsquigarrow P_{0}\right)\right)>O\left(1\right)$.
Constant LDTM can not compute these chain. That is inconsistent with
assumptions and thus can not compute with LDTM.

From the above, $L\not\ni TWINE$.\end{proof}
\begin{thm}
$NL\supsetneq L$\end{thm}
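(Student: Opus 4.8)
The plan is to assemble the final separation from the two component results already in hand, exactly as the earlier capstone theorems ($NP\supsetneq AL$, $P\supsetneq NC$, and $NC\supsetneq NL$) were assembled. First I would invoke $NL\ni TWINE$, which exhibits the explicit witness problem inside the larger class. Second I would invoke $L\not\ni TWINE$, which places the same witness outside the smaller class. Third I would append the standard inclusion $NL\supset L$, since every logarithmic-space deterministic computation is in particular a logarithmic-space nondeterministic one. Combining these three facts yields a problem, namely $TWINE$, that lies in $NL$ but not in $L$, and this upgrades the inclusion $NL\supset L$ to the strict separation $NL\supsetneq L$.

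Concretely, the deduction runs as follows. From $L\not\ni TWINE$ we have that $TWINE$ is not an $L$ problem, while from $NL\ni TWINE$ we have that $TWINE$ is an $NL$ problem; hence $NL\neq L$ as sets of problems. Together with the containment $NL\supset L$ this forces $NL\supsetneq L$. This is a one-line combination and requires no new machinery beyond the two lemmas and the trivial inclusion, so the present theorem is immediate once those are granted.

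Since all the genuine content sits in the two component statements, the only place difficulty could arise is not this final combination but the robustness of $L\not\ni TWINE$. In particular, the argument I would lean on is the claim, via Theorem~\ref{thm:Limit of LDTM} and Theorem~\ref{thm:Size of TWINE's Rotate Path}, that an $LDTM$ confined to $O\left(\lg\left(n\right)\right)$ space cannot tell apart the $O\left(n^{L\left(P_{0}\rightsquigarrow P_{0}\right)}\right)$ rotate paths and is therefore driven into a caller/callee chain of length $O\left(L\left(P_{0}\rightsquigarrow P_{0}\right)\right)>O\left(1\right)$, which a constant number of $LDTM$s cannot realize. Assuming those results as stated earlier, the conclusion $NL\supsetneq L$ follows directly, mirroring the form of the preceding separation theorems.
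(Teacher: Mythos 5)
Your proof matches the paper's own argument exactly: it combines $NL\ni TWINE$, $L\not\ni TWINE$, and the trivial inclusion $NL\supset L$ to conclude the strict separation $NL\supsetneq L$. Granting the two component theorems as the paper does, your one-line combination is precisely the paper's proof, so there is nothing to add.
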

\begin{proof}
$NL\ni TWINE$, $L\not\ni TWINE$, and $NL\supset L$, thus we see
$NL\supsetneq L$.
\end{proof}

\section{Conclusion}

These results lead to the conclusion.
\begin{thm}
$NP\supsetneq AL=P\supsetneq NC\supsetneq NL\supsetneq L$\end{thm}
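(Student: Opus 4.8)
The plan is to treat the final statement purely as a concatenation of the four separations proved in Sections 4 through 7 together with the classical equality $AL=P$, and to close the chain by transitivity of strict inclusion. First I would invoke the Section 4 result $NP\supsetneq AL$ and pair it with $AL=P$ to obtain the leftmost link $NP\supsetneq AL=P$. Then I would carry over, in order, the three further theorems $P\supsetneq NC$, $NC\supsetneq NL$, and $NL\supsetneq L$ established in Sections 5, 6, and 7 respectively.

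The second step is to check that the four links actually compose into a single chain. For this I would recall the standard easy inclusions $L\subseteq NL\subseteq NC\subseteq P\subseteq NP$, which guarantee that each separation is a strict inclusion between genuinely adjacent classes rather than a mere incomparability statement. Since $\subsetneq$ is transitive and each theorem upgrades a known $\subseteq$ into a $\subsetneq$, splicing the four strict inclusions at their shared endpoints immediately yields $NP\supsetneq AL=P\supsetneq NC\supsetneq NL\supsetneq L$, which is exactly the desired statement.

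The assembly itself is essentially bookkeeping, so I do not expect the main obstacle to lie here. The real weight of the argument sits upstream: each separation rests on exhibiting a witness problem ($CHAOS$, $ORDER$, $LAYER$, $TWINE$) that belongs to the larger class but whose depend-path and well-foundedness structure, per Theorems \ref{thm:Structure of TM} and \ref{thm:CP as set}, cannot be reproduced by the machine model of the smaller class. The step I would scrutinize hardest is therefore the claim, used for each link, that the cyclic transitive (rotate-path) structure forces more than the available logarithmic or constant space in the weaker model; any gap in even one of those four separations would sever a link and collapse the whole chain. Granting the four separations as stated, however, the conclusion follows at once by concatenation.
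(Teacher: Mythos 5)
Your proposal matches the paper's own treatment: the paper gives no separate argument for this final theorem beyond the remark that ``these results lead to the conclusion,'' i.e., it is obtained exactly as you describe, by concatenating the previously stated results $NP\supsetneq AL=P$, $P\supsetneq NC$, $NC\supsetneq NL$, and $NL\supsetneq L$ from Sections 4--7 and splicing them by transitivity of strict inclusion. Your added observations (the standard inclusions $L\subseteq NL\subseteq NC\subseteq P\subseteq NP$ and the caveat that the entire burden rests on the four upstream witness-problem separations) are consistent with the paper and do not change the approach.
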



\begin{thebibliography}{1}
\appendix
\bibitem{Introduction to the Theory of COMPUTATION Second Edition}
Michael Sipser, (translation) OHTA Kazuo, TANAKA Keisuke, ABE Masayuki,
UEDA Hiroki, FUJIOKA Atsushi, WATANABE Osamu, Introduction to the
Theory of COMPUTATION Second Edition, 2008
\end{thebibliography}
\end{document}